\newcommand{\PARITY}{\texttt{PARITY}\xspace}
\newcommand{\FANOUT}{\texttt{FAN-OUT}\xspace}
\newcommand{\AND}{\texttt{AND}\xspace}
\newcommand{\OR}{\texttt{OR}\xspace}
\newcommand{\NOT}{\texttt{NOT}\xspace}
\newcommand{\TOFFOLI}{\texttt{TOFFOLI}\xspace}
\newcommand{\THRESHOLD}{\texttt{THRESHOLD}\xspace}
\newcommand{\EXACTONE}{\texttt{EXACT[1]}\xspace}
\newcommand{\EXACT}{\texttt{EXACT}}
\newcommand{\THRESH}{\THRESHOLD}
\newcommand{\THR}{\texttt{TH} \xspace}
\newcommand{\SWAP}{\texttt{SWAP}}
\newcommand{\ACZ}{\textsf{AC}$^0$\xspace}
\newcommand{\QACZ}{\textsf{QAC}$^0$\xspace}
\newcommand{\QACZf}{\textsf{QAC}$^0_f$\xspace}
\newcommand{\calF}{\mathcal{F}}
\newcommand{\calO}{\mathcal{O}}
\newcommand{\calS}{\mathcal{S}}
\renewcommand{\poly}{\text{poly}}
\newcommand{\CZ}{\ensuremath{\texttt{CZ}}\xspace}
\renewcommand{\Pr}{\mathop{\bf Pr\/}}
\renewcommand{\S}{\mathcal{S}}
\newcommand{\bn}{\mathsf{bin}}
\theoremstyle{plain} 
\newtheorem{lemma}{Lemma}
\newtheorem{corollary}{Corollary}
\newtheorem{claim}{Claim}
\theoremstyle{definition} 
\newtheorem{fact}{Fact}
\theoremstyle{remark}
\def\DRAFT{1}
\newcommand{\malvika}[1]{\textcolor{purple}{[\textbf{Malvika:} {#1}]}}
\newcommand{\fran}[1]{\textcolor{red}{[\textbf{Fran:} {#1}]}}
\newcommand{\malvika}[1]{}
\newcommand{\fran}[1]{}
\title{Constant-Depth Unitary Preparation of Dicke States}
\author{
  Malvika Raj Joshi$^*$ \vspace{0.05in}\\ 
  \small UC Berkeley \vspace{0.05in}\\ 
  \footnotesize\url{malvika@berkeley.edu}
 \and 
  Francisca Vasconcelos$^*$ \vspace{0.05in}\\ 
  \small UC Berkeley \vspace{0.05in}\\ 
  \footnotesize\url{francisca@berkeley.edu}
}
\date{}
\author{
  Anonymous Authors.
}
\date{}
\begin{document}
\def\thefootnote{*}\footnotetext{Both authors contributed equally to this work. Ordering is alphabetical.}\def\thefootnote{\arabic{footnote}}
\maketitle

\begin{abstract}
    Dicke states serve as a critical resource in quantum physics, metrology, communication, and computation. Notably, they function as the primary quantum resource for Decoded Quantum Interferometry~\cite{jordan2025optimization}, a recent candidate algorithm for quantum advantage in combinatorial optimization. However, unitary preparation of these highly entangled states within the standard quantum circuit model is fundamentally limited to logarithmic depth. Furthermore, all known constant-depth protocols for exact preparation rely on measurement and adaptive feed-forward.
    
    In this work, we present the first unitary, constant-depth protocols for \emph{exact} Dicke state preparation. We overcome the logarithmic-depth barrier by moving beyond the standard circuit model and leveraging global interactions (native to architectures such as neutral atoms and trapped ions). For constant-weight Dicke states, we provide an explicit circuit, utilizing unbounded \CZ gates (i.e. within the \QACZ circuit class), with polynomial ancilla overhead. Within this circuit class, we also describe a \emph{constant}-ancillae circuit for the constant-error approximation of weight-1 Dicke states (i.e. $W$ states). Granted additional access to the quantum \FANOUT operation (i.e. upgrading to the \QACZf circuit class), we also achieve a constant-depth construction for arbitrary-weight Dicke states.
    
    The inherent challenges in preparing super-constant-weight Dicke states without access to \FANOUT suggest these states as a natural witness for a state-synthesis separation between \QACZ and \QACZf. Such a separation would resolve a long-standing open question in quantum complexity theory~\cite{moore1999qac0}. Experimentally, it would imply a computational hierarchy amongst quantum hardware architectures in the constant-depth regime: placing systems capable of global \FANOUT operations (e.g. trapped ions) above those restricted to global \CZ operations (e.g. neutral atoms), with both strictly outperforming architectures bound by local geometry (e.g. standard superconducting lattices).
\end{abstract}


\thispagestyle{empty}
\newpage

{\small\tableofcontents}

\thispagestyle{empty}
\newpage
\pagenumbering{arabic} 

\section{Introduction} \label{sec:intro}
Distinguished by their permutation symmetry and robustness against particle loss, Dicke states are a cornerstone of multipartite entanglement, with applications in quantum physics, metrology, communication, and computation.  Mathematically, for $k \in [n-1]$, the $n$-qubit weight-$k$ Dicke state is simply defined as the uniform superposition state over all computational basis states $\ket{x}$, i.e. for $\x \in \{0,1\}^n$ of Hamming weight $|\x|=k$, 
\begin{align}
    \ket{D^n_k} = \frac{1}{\sqrt{\binom{n}{k}}} \sum_{\substack{\x \in \{0,1\}^n: \\ |\x| = k}} \ket{\x}.
\end{align}
Robert Dicke originally introduced Dicke states in 1954 to explain super-radiance phenomena in atom ensembles \cite{dicke1954state}. However, these states have since come to be used in a variety of contexts, including: quantum game theory \cite{ozedmir2007games}, quantum networking \cite{prevedel2009network, chiuri2012network}, quantum secret sharing \cite{guo2024secret}, Heisenberg-limited metrology \cite{toth2012metrology}, quantum error correction \cite{ouyang2014codes,ouyang2021permutation}, quantum storage/memory \cite{ouyang2021storage, fleischhauer2002memory}, adiabatic optimization \cite{childs2002adiabatic}, variational quantum optimization \cite{hadfield2019qaoa, wang2020xy, cook2020qaoa, bartschi2020qaoa,golden2021qaoa}, radiative control \cite{bienaime2012control}, quantum transport \cite{rebentrost2009transport}, and entanglement benchmarking \cite{somma2006lb}. Notably, Dicke states serve as the primary quantum resource in Decoded Quantum Interferometry (DQI) \cite{jordan2025optimization}---a promising new algorithmic framework for achieving quantum advantage in constraint satisfaction.

Current strategies for preparing Dicke states with quantum computers face a fundamental trade-off between circuit-depth and control-complexity. Standard quantum compilation strategies are typically restricted to constant-width gates, which (even for architectures with all-to-all connectivity) imposes an entanglement-distribution bottleneck. This results in circuit depths that scale at least logarithmically, i.e. $\Omega(\log n)$, in the number of qubits $n$ \cite{yuan2025depth}. 
This scaling introduces a `decoherence wall': as $n$ increases, the circuit duration eventually exceeds the coherence window of the device, causing state fidelity to vanish exponentially for large system sizes \cite{Preskill2018quantumcomputingin}. To bypass this depth-barrier, recent proposals have turned to adaptive quantum circuits, which achieve constant-depth using mid-circuit measurements and classical feed-forward \cite{Buhrman2024statepreparation,yu2025efficient}. While theoretically efficient, these protocols require fast, low-latency classical logic to process measurement outcomes within device coherence times. This introduces notable control complexity and non-unitary noise channels, which are challenging to mitigate.

In this work, we crucially demonstrate that this depth-adaptivity trade-off is \emph{unnecessary} for Dicke state synthesis. Concretely, \textbf{we propose the first simultaneously \emph{unitary} and \emph{constant-depth} quantum circuit constructions for exact preparation of arbitrary-weight $n$-qubit Dicke states.} 

From a many-body physics perspective, the measurement-free, constant-depth generation of the long-range, multipartite entanglement characteristic of Dicke states might appear to contradict fundamental speed limits. In systems restricted to local interactions, the propagation of information is governed by Lieb-Robinson bounds \cite{lieb1972finite}, which dictate that global entanglement generation across a diameter $L$ requires time $t \propto L$ \cite{bravyi2006lieb}. However, our approach bypasses this constraint by leveraging architectures with \emph{global} interactions. Notably, the strict linear causal bounds of local systems vanish in the presence of long-range or all-to-all couplings \cite{tran2020hierarchy}. By exploiting these non-local resources, we demonstrate that complex, permutation-invariant entanglement can be generated in constant quantum time, consistent with the physics of long-range interacting systems.

From a practical perspective, the global interactions used in our protocols are not merely theoretical wishes, but an increasingly viable reality for a number of leading experimental quantum architectures. In trapped ion systems, many-qubit \FANOUT gates are realizable via global Mølmer–Sørensen (MS) interactions, which generate GHZ-like correlations without decomposition into pairwise gates \cite{monz2011entangle, figgatt2019parallel}. In neutral atom arrays, many-qubit \CZ gates can be implemented via a Rydberg blockade mechanisms \cite{isenhower2011multibit,levine2019parallel, delakouras2025multi}. In fact, a recent breakthrough leveraged reconfigurable atom arrays to demonstrate transversal logical multi-qubit \CZ gates and apply constant-depth multi-body logic across 48 logical qubits simultaneously \cite{bluvstein2024logical}.

Finally, from the perspective of quantum complexity theory, our protocols are explicitly implemented in the \QACZ and \QACZf circuit classes. Note that \QACZ (\QACZf) is simply the family of constant-depth, polynomial-sized circuits comprised of arbitrary single-qubit and global \CZ (\FANOUT) gates. Although \QACZf trivially contains \FANOUT, it remains unknown whether \QACZ can implement \FANOUT. Since the \FANOUT gate can straightforwardly be used to compute the GHZ/CAT state, most of the literature on \QACZ versus \QACZf has focused on synthesis of GHZ states and its variants \cite{rosenthal2021qac0,anshu2025computational,joshi2025improvedlowerboundsqac0}. However, in this work, by instead focusing on the ability of \QACZ~and \QACZf~to a prepare a different class of highly entangled states, we achieve interesting new complexity implications. For example, by the recent lower-bounds of \cite{parham2025quantumcircuitlowerbounds}, our exact implementation of the $W$ state in \QACZ~implies that \QACZ~is not in the first level of the magic hierarchy. Furthermore, this work achieves protocols for preparation of \emph{arbitrary}-weight Dicke states in \QACZf, but only \emph{constant}-weight Dicke states in \QACZ. This implies that a lower-bound against preparation of any $\omega(1)$-weight Dicke state in \QACZ suffices to achieve a state synthesis separation between \QACZ and \QACZf --- a novel path towards resolving a major long-standing open question in quantum complexity theory \cite{moore1999qac0}.

\paragraph{Paper Overview.} The remainder of the paper will introduce necessary background, give a conceptual overview of the key results, and offer a comparison to related prior work. Concretely, \Cref{sec:prelims} provides an overview of the necessary background and results from quantum complexity theory used in this work. \Cref{sec:qac0_constant_weight} describes our first main series of results: exact and approximate unitary preparation of constant-weight Dicke states via global \CZ gates. First, we establish a direct connection between computation of the \EXACT$_k$ Boolean function and preparation of weight-$k$ Dicke states ($\ket{D^n_k}$) in \QACZ, for constant $k=\calO(1)$. We then offer an explicit polynomial-sized \QACZ circuit for exact computation of \EXACT$_k$ (for $k=\calO(1)$), resulting in a polynomial-sized \QACZ circuit for exact preparation of $\ket{D^n_k}$. We also offer an explicit constant-sized \QACZ circuit for approximation of \EXACT$_1$, resulting in a \QACZ circuit for constant-error approximation of the $W$ state, with only constant ancillae. 
\Cref{sec:qac0f_aritrary_weight} discusses our other main result: exact unitary preparation of arbitrary-weight Dicke states via global \FANOUT gates. In other words, we offer an explicit polynomial-sized \QACZf circuit for exact computation of arbitrary-weight Dicke states.  \Cref{sec:comp_prior_work} compares our unitary, constant-depth Dicke state constructions to prior unitary and measurement-based constructions for Dicke state preparation. Finally, \Cref{sec:discussion} offers a discussion of the results, including directions for future work.

\section{Preliminaries} \label{sec:prelims}
Before describing the results of the paper in more detail, we begin with an overview of relevant prior work in quantum circuit complexity, specifically pertaining to the \QACZ and \QACZf circuit classes.

\paragraph{The \QACZ Circuit Class.} The class \QACZ is defined as the family of constant-depth, polynomial-sized quantum circuits composed of arbitrary single-qubit gates and global (unbounded) \CZ gates. The global \CZ gate on a subset of qubits $S \subseteq [n]$ applies a phase of $-1$ if and only if all qubits in $S$ are in the state $\ket{1}$, i.e. for $\x \in \bin^n$,
\begin{align}
    \CZ_S \ket{\x} = \begin{cases}
        -\ket{\x}, & \text{if } \x = 1^{|S|} \\
        \ket{\x}, & \text{otherwise}
    \end{cases}.
\end{align}
Note that in this model, arbitrary single-qubit unitaries are considered depth-0 operations.

\QACZ was introduced by~\cite{moore1999qac0} as the natural quantum analogue of the classical circuit class \ACZ---which is defined by unbounded \AND, \OR, and \NOT gates. \QACZ mirrors this structure via the unbounded \TOFFOLI gate, which, for a given set $S$, flips a target qubit $t$ iff $\bigwedge_{i \in S} x_i = 1$. Note that the unbounded \TOFFOLI can be implemented in depth-0 \QACZ by conjugating a global \CZ gate with Hadamard gates on the target $t$. By DeMorgan's laws, the unbounded \OR gate is similarly implementable in depth-1. 

A structural characterization of \QACZ, particularly relevant to physical implementations, is the ``normal form'' due to \cite{rosenthal2021qac0}. Rosenthal showed that any \QACZ circuit can be manipulated into a form consisting of a single layer of single-qubit rotations, followed by a constant number of layers of product state reflections. A product state reflection is a unitary of the form $R_S(\phi) = I - 2\ketbra{\phi}{\phi}$, where $\ket{\phi} = \bigotimes_{i \in S} \ket{\phi_i}$ is a product state, composed of single-qubit states $\ket{\phi_i}$. Conceptually, this frames \QACZ as the class of circuits composed of global, but low-entangling (Schmidt-rank 2) gates.

\paragraph{The \QACZf Circuit Class.} While classical \ACZ allows for unbounded \FANOUT gates, Moore's original definition of \QACZ did not. The class \QACZf is defined as \QACZ imbued with the unbounded quantum \FANOUT gate. The quantum \FANOUT gate copies a control bit $c$ into a register of target qubits $t_1, \dots, t_m$, i.e.
\begin{align}
    \FANOUT \ket{c}\ket{t_1 \dots t_m} = \ket{c}\ket{t_1 \oplus c \dots t_m \oplus c}.
\end{align}

A long-standing open question in quantum complexity theory, originally posed by \cite{moore1999qac0}, is whether \QACZ $=$ \QACZf~? This is equivalent to asking if global \CZ gates can simulate \FANOUT in constant depth. The converse is known to be true, i.e. \cite{takahashi2012collapse} proved that unbounded \FANOUT (combined with single-qubit gates) can simulate unbounded \CZ in constant depth. The power of \QACZf is evidenced by its ability to compute the \PARITY function. While \PARITY $\notin$ \ACZ~\cite{hastad1986switch}, \PARITY $\in$ \QACZf since \PARITY can be implemented via Hadamard conjugation of \FANOUT \cite{moore1999qac0}.

Beyond \PARITY, \QACZf is capable of computing complex arithmetic functions, including phase estimation and the quantum Fourier transform~\cite{hoyer2005fanout}. Crucial to the results of this work is the ability of \QACZf to exactly compute the classical \THRESH Boolean function.
\begin{fact}[Exact \THRESH in \QACZf~\cite{takahashi2012collapse}] \label{fact:thresh_qac0f}
    For any $k \in [0, n]$ and all $\x \in \bin^n$, the function 
    \begin{align}
        \textnormal{\THRESH}_k(\x) = \begin{cases}
            1, & \text{if } |\x|\leq k\\
            0, & \text{otherwise}
        \end{cases}
    \end{align}
    can be implemented exactly by a \QACZf, with polynomial ancillae.
\end{fact}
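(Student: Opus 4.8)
The plan is to reduce the problem to computing the Hamming weight. By definition $\textnormal{\THRESH}_k(\x)$ is the indicator of $|\x|\le k$, so it suffices to: (i) compute $w:=|\x|$ in binary, reversibly and exactly, into a register of $m:=\lceil\log_2(n+1)\rceil$ ancillae; (ii) apply a fixed comparator $\ket{w}\ket{0}\mapsto\ket{w}\ket{[\,w\le k\,]}$, which, since $w$ and $k$ are $O(\log n)$-bit numbers, is an \ACZ (hence \QACZ, hence \QACZf) operation; (iii) copy the resulting bit onto the output; and (iv) uncompute (ii) and (i). Steps (ii)--(iv) are routine, so the construction hinges on step (i): exact, constant-depth computation of the Hamming weight inside \QACZf.

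For step (i) I would use Fourier-based counting. Set $M:=2^m>n$ and initialize the weight register in $\ket{+}^{\otimes m}=\tfrac{1}{\sqrt{M}}\sum_{a=0}^{M-1}\ket{a}$. The target is the state $\tfrac{1}{\sqrt{M}}\sum_{a=0}^{M-1}e^{2\pi i\,wa/M}\ket{a}$, which (up to a sign convention) is the inverse Fourier image over $\Z_M$ of $\ket{w\bmod M}=\ket{w}$, using $w\le n<M$. Producing it requires, for every input qubit $\x_j$ and every register position $\ell\in\{0,\dots,m-1\}$, the commuting two-qubit controlled-phase $\ket{\x_j}\ket{a_\ell}\mapsto e^{2\pi i\,\x_j a_\ell 2^\ell/M}\ket{\x_j}\ket{a_\ell}$. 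This is where \FANOUT is essential: for each fixed $\ell$ (all in parallel), \FANOUT the register qubit $a_\ell$ onto $n$ fresh ancillae, apply the $n$ controlled-phases (one per input bit) in a single layer, then \FANOUT again to disentangle and reset the ancillae. Arbitrary single-qubit gates are free, \CZ-conjugation yields the two-qubit controlled-phases, and all phase angles are exact, so this stage is exact, constant-depth, and uses $O(n\log n)$ ancillae.

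It then remains to apply the quantum Fourier transform over $\Z_M$ to the weight register, obtaining $\ket{w}$ exactly, and to feed $\ket{w}$ into the comparator of step (ii). The constant-depth \QACZf-implementability of the \emph{exact} QFT is the ingredient I would import: the $\Theta(m^2)$ controlled-phase gates of the textbook QFT pairwise commute and can be flattened to constant depth by using \FANOUT to parallelize each bundle of controlled-phases sharing a common qubit (in the spirit of the Cleve--Watrous parallelization of the QFT), and the collapse of the exact constant-depth hierarchy in \cite{takahashi2012collapse} is precisely what upgrades this to zero error. Using instead the approximate QFT of \cite{hoyer2005fanout} would only yield a bounded-error \QACZf circuit for \THRESH, not an exact one.

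Accordingly, I expect the main obstacle to be exactly this step: the exact, constant-depth QFT, equivalently, exact read-out of the binary digits of $w$ from their Fourier-phase encoding. Intuitively, ``loading'' the phases is easy with \FANOUT, but ``unloading'' the digits of $w$ has an apparent $\Theta(\log n)$-depth dependency chain --- digit $w_\ell$ seems to need $w_0,\dots,w_{\ell-1}$ first, as in a carry computation --- and decoupling this to constant depth while keeping the circuit error-free is the crux. For a clean write-up I would cite \cite{takahashi2012collapse} for this step rather than reprove it; the remaining pieces (the reduction to Hamming weight, the \FANOUT phase-loading, and the comparator) are elementary.
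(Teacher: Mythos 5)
The paper offers no proof of this statement---it is imported as a black-box Fact from \cite{takahashi2012collapse}---so the only question is whether your sketch would stand on its own. Its architecture (reduce \THRESH$_k$ to the Hamming weight $w=|x|$, load $w$ into Fourier phases using \FANOUT, finish with an $\calO(\log n)$-bit comparator in \ACZ) is sound, and the phase-loading stage is correct, exact, and uses $\calO(n\log n)$ ancillae as you say. The gap is precisely the step you flag as the crux: you decode $\ket{w}$ from its phase encoding via an \emph{exact} constant-depth inverse QFT over $\Z_{2^m}$ and propose to import that primitive from \cite{takahashi2012collapse}. An exact constant-depth QFT with \FANOUT is a considerably stronger claim than the Fact you are proving; the constant-depth QFT of \cite{hoyer2005fanout} is only approximate, and the hierarchy collapse of \cite{takahashi2012collapse} does not hand you an exact QFT to cite. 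As written, the load-bearing step of your proof is deferred to a reference that does not obviously contain it in the form you need.

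The standard route to exactness avoids decoding $w$ altogether. With $M=2^m>n$, the phase-loaded register is the product state $\ket{\phi(w)}=\bigotimes_{\ell=0}^{m-1}\tfrac{1}{\sqrt2}\bigl(\ket{0}+e^{2\pi i w2^{\ell}/M}\ket{1}\bigr)=\tfrac{1}{\sqrt M}\sum_{a=0}^{M-1}e^{2\pi i wa/M}\ket{a}$, and the geometric sum gives $\braket{\phi(0)}{\phi(w)}=1$ if $w\equiv 0 \pmod M$ and $0$ otherwise. Hence, after applying the single-qubit phases $\mathrm{diag}(1,e^{-2\pi i t2^{\ell}/M})$ (shifting $w\mapsto w-t$) and then $H^{\otimes m}$, the register has amplitude exactly $1$ on $\ket{0^m}$ when $|x|=t$ and exactly $0$ otherwise, since $|w-t|\le n<M$. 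A single unbounded \TOFFOLI with negated controls therefore writes $\EXACT_t(x)$ into an ancilla \emph{deterministically}. Running this for every $t\in\{0,\dots,k\}$ in parallel (fanning out each input bit $k+1$ times) and taking one unbounded \OR of the $k+1$ outputs yields \THRESH$_k$ exactly with $\calO(n^2\log n)$ ancillae---no QFT read-out and no $\Theta(\log n)$ carry chain anywhere. (If you do want $\ket{w}$ in binary as in your step (i), compute all $n+1$ values $\EXACT_t(x)$ in parallel and convert the resulting one-hot encoding to binary with \OR gates; for \THRESH$_k$ that detour is unnecessary.)
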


\paragraph{Progress on \QACZ vs \QACZf.}
Nearly 30 years since Moore's original proposal \cite{moore1999qac0}, it still remains open as to whether \QACZ = \QACZf.
However, there has been notable progress on proving both lower- and upper-bounds in recent years. On the end of lower-bounds, building upon prior results by \cite{fang2006qlb,pade2020depth2qaccircuitssimulate,rosenthal2021qac0,nadimpalli2024pauli,fenner2025tightboundsdepth2qaccircuits}, \cite{anshu2025computational} proved a slightly super-linear ancillae lower-bound for arbitrary constant-depth \QACZ and \cite{joshi2025improvedlowerboundsqac0} established an unlimited ancillae lower bound for depth-3 \QACZ. Excitingly, structural properties underlying the lower-bound of \cite{nadimpalli2024pauli} have resulted in time-efficient learning of \QACZ unitaries~\cite{vasconcelos2025learning}.

More pertinent to this work, however, are recent upper-bounds on \QACZ. Crucially, it is established that \QACZ can simulate \FANOUT (and thus \QACZf), with access to super-polynomial ancillae. \cite{rosenthal2021qac0} first provided an \emph{approximate} implementation, utilizing exponential ancillae. \cite{foxman2025pru} observed that by restricting this \FANOUT gadget to $\mathcal{O}(\log n)$ qubits the ancilla-overhead is reduced to polynomial, which they leveraged to implement weak pseudorandom unitaries in \QACZ. Most recently, \cite{grier2026qac0} employed amplitude amplification to promote Rosenthal's approximate implementation to an \emph{exact} one.
\begin{fact}[Exact \FANOUT in Exp-Size \QACZ \cite{grier2026qac0}] \label{fact:grier_exact}
    The unbounded quantum \FANOUT operation on $n$ qubits can be implemented exactly by a \QACZ circuit of constant depth and size $\calO(2^n)$.
\end{fact}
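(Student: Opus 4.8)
The plan is to take Rosenthal's \emph{approximate} \FANOUT gadget and promote it to an \emph{exact} one by exact (oblivious) amplitude amplification, so the only new ingredients beyond \cite{rosenthal2021qac0} are (i) a structural observation that the gadget's success amplitude is input-independent and (ii) a one-qubit ``tuning'' step so that amplification terminates at amplitude exactly $1$. Concretely, Rosenthal's construction gives, for each $n$, a constant-depth \QACZ circuit $V$ acting on the $n$ input wires plus $m = 2^{\calO(n)}$ ancillae initialized to $\ket{0^m}$ such that for every $x \in \B^n$,
\begin{align}
  V\ket{x}\ket{0^m} \;=\; \sin\theta\,\bigl(\FANOUT\ket{x}\bigr)\ket{0^m} \;+\; \cos\theta\,\ket{\Psi_x^\perp},
\end{align}
where $\ket{\Psi_x^\perp}$ has zero overlap with the ancilla string $0^m$. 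The first thing I would pin down is that $\theta$ is a \emph{fixed constant}, the same for all $x$: this is essential, because a single circuit performing amplitude amplification must run the same number of rounds on every computational-basis input. Input-independence should follow from the symmetry of Rosenthal's construction in the input bits and its computational branches; if it does not hold on the nose, it can be forced by symmetrizing the gadget (e.g., conjugating by a fixed averaging over permutations of a set of dummy copies), at no more than polynomial extra cost. It suffices that $\theta$ is bounded below by a constant — we do not need $\sin\theta$ close to $1$.

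Given such a $V$ with a known constant $\theta$, I would run oblivious amplitude amplification with respect to the reflection $S_0 = I - 2\ketbra{0^m}{0^m}$ about the all-zeros ancilla string: on the two-dimensional space $\mathrm{span}\{(\FANOUT\ket{x})\ket{0^m},\,\ket{\Psi_x^\perp}\}$ the iterate $-V S_0 V^\dagger S_0$ acts (simultaneously for all $x$, by the obliviousness just established) as a rotation by $2\theta$, so after $\ell$ rounds the ``good'' amplitude is $\sin((2\ell+1)\theta)$. To hit $1$ exactly I would first adjust the amplitude with one extra ancilla: append a fixed single-qubit rotation $\ket{0}\mapsto\cos\beta\ket{0}+\sin\beta\ket{1}$ to the padded circuit $V'$, redefine ``good'' as ``ancilla register $=0^m$ and the extra qubit $=1$'', and pick $\ell=\calO(1)$ together with $\beta$ so that the effective success amplitude $\sin\theta' := \sin\theta\,\sin\beta$ satisfies $(2\ell+1)\theta' = \pi/2$; this is always possible since $\theta$ is an explicitly computable constant and $\sin\theta'$ can be made as small as needed. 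Then exactly $\ell$ amplification rounds map $\ket{x}\ket{0^m}\ket{0}$ to $(\FANOUT\ket{x})\ket{0^m}\ket{1}$ with no residual error, and the ancilla and extra wires are in fixed states and can be discarded, yielding \FANOUT exactly.

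For the resource count, note that $S_0$ on $2^{\calO(n)}$ ancillae is, up to single-qubit $X$-conjugations, a single global \CZ gate and hence depth $\calO(1)$ in \QACZ, while the tuning rotation and the reflection marking the ``good'' flag are single-qubit; since $V'$ and $V'^\dagger$ are constant-depth of size $2^{\calO(n)}$ and we apply a constant number of them interleaved with these reflections, the whole thing is a constant-depth \QACZ circuit of size $\calO(2^n)$. The main obstacle I expect is precisely step (i): establishing that Rosenthal's approximate \FANOUT has an input-independent success amplitude of a clean enough (block-encoding-like) form to feed into \emph{exact} oblivious amplitude amplification — without this one would be forced into amplitude estimation and lose exactness, so the work is in extracting or enforcing this symmetry and verifying that the explicit constant $\theta$ makes the tuning step rigorous.
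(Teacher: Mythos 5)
This statement is an imported result (Fact~\ref{fact:grier_exact}, cited from \cite{grier2026qac0}); the paper gives no proof of its own, only the one-line description that \cite{grier2026qac0} ``employed amplitude amplification to promote Rosenthal's approximate implementation to an exact one.'' Your proposal is exactly that strategy---Rosenthal's exponential-ancilla approximate \FANOUT, made oblivious via the input-independence of the success amplitude, then tuned with an extra ancilla rotation so that a constant number of exact amplitude-amplification rounds lands on amplitude $1$---which also mirrors the tuning trick the paper itself uses in the proofs of \Cref{thm:w_to_exact} and \Cref{thm:exact_qac0f}, so it matches the intended approach.
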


\section{Constant-Weight Dicke States via Global \CZ Gates} \label{sec:qac0_constant_weight}
We will now describe how to unitarily prepare constant-weight Dicke states in constant-depth, via global \CZ operations (i.e. within the \QACZ circuit class). Our first main result is an explicit \QACZ circuit for exact preparation of constant-weight Dicke states, requiring polynomial ancillae.
\begin{restatable}[Exact Constant-Weight Dicke in \QACZ]{theorem}{exactdickeqaczero} \label{thm:exact_dicke}
    For any $k=\calO(1)$, there exists a \QACZ circuit for exact preparation of the weight-$k$ Dicke state, using $\calO(n^{k+1})$ ancillae.
\end{restatable}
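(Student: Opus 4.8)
The plan is to reduce the preparation of $\ket{D^n_k}$ to the exact computation of the Boolean function $\EXACT_k$ in \QACZ, and then to build an explicit \QACZ circuit for $\EXACT_k$ when $k = \calO(1)$.

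The reduction step: Start with $n$ qubits and apply a Hadamard to each, producing the uniform superposition $2^{-n/2}\sum_{\x \in \bin^n} \ket{\x}$. If we can compute $\EXACT_k(\x)$ (the indicator of $|\x| = k$) reversibly into an ancilla qubit $b$, and then "flag" and uncompute, we isolate exactly the weight-$k$ branch. Concretely, compute $b \leftarrow \EXACT_k(\x)$, then the state is a superposition over all $\x$ with $b = \EXACT_k(\x)$; the $b=1$ subspace, upon renormalization, is precisely $\ket{D^n_k}$ (up to the unwanted $b=0$ garbage). To extract it \emph{unitarily} and \emph{exactly}, one uses a fixed-point amplitude-amplification / reflection argument: the amplitude on $b = 1$ is $\sqrt{\binom{n}{k}/2^n}$, which is a known constant (depending only on $n,k$), so a constant number of Grover-type reflections, combined with a suitable initial single-qubit rotation to set the amplitude to one of the "perfect" values $\sin(\pi/(4\ell+2))$, amplifies the good branch to weight exactly $1$. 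Each reflection is a product-state reflection (reflect about $\ket{0^n}$ on the input register, sandwiched by the $\EXACT_k$-computing unitary), hence lies in \QACZ by Rosenthal's normal form; the only subtlety is arranging the exact amplitude, handled by padding with a few ancilla qubits rotated by fixed angles so the total good-amplitude hits a value of the form $\sin(\theta)$ with $(2\ell+1)\theta = \pi/2$. This is exactly the kind of move already used in \Cref{fact:grier_exact}, so I would cite/adapt that machinery.

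The core step: compute $\EXACT_k$ exactly in \QACZ with polynomial ancillae. Since $|\x| = k$ iff $|\x| \le k$ and $|\x| \ge k$, i.e. $\EXACT_k = \THRESH_k \wedge \THRESH_{k-1}^{\neg}$ (in the notation of \Cref{fact:thresh_qac0f}, $\THRESH_k(\x) = \one[|\x| \le k]$), it suffices to compute $\THRESH_{k}$ and $\THRESH_{k-1}$ exactly — but \Cref{fact:thresh_qac0f} places these in \QACZf, not \QACZ, because the known construction uses \FANOUT. The point of restricting to $k = \calO(1)$ is that $\THRESH_k$ for constant $k$ is an \emph{extremely} sparse symmetric function: $|\x| \le k$ iff \emph{not} $\bigvee_{S : |S| = k+1} \bigwedge_{i \in S} x_i$, an \OR over $\binom{n}{k+1} = \calO(n^{k+1})$ terms, each a $(k{+}1)$-wise \AND. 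Each \AND of $k+1$ bits is a single unbounded \TOFFOLI (depth-0/1 in \QACZ via Hadamard-conjugated \CZ), writing its output to one of $\calO(n^{k+1})$ ancillae; the big \OR over these ancillae is one more unbounded \OR gate (depth $1$ by DeMorgan, as noted in the preliminaries). This gives $\overline{\THRESH_k}(\x) = \one[|\x| > k]$, hence $\THRESH_k$, in constant depth with $\calO(n^{k+1})$ ancillae. Computing $\EXACT_k = \THRESH_k \wedge \overline{\THRESH_{k-1}}$ is then one more bounded-to-unbounded \AND. All ancillae are uncomputed by running the circuit in reverse after copying out the flag, so the net operation is unitary on input register $\otimes$ flag. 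The ancilla count $\calO(n^{k+1})$ matches the claimed bound.

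The main obstacle I anticipate is not the \QACZ implementation of $\EXACT_k$ — that is a clean, shallow, sparse-polynomial-threshold construction — but rather making the extraction of the good branch \emph{exactly} unitary rather than approximate. Naively, post-selecting on the flag is non-unitary; fixed-point amplitude amplification gives \emph{exact} success probability $1$ only when the initial amplitude is tuned to a discrete set of values, so I must verify that a constant number of ancilla qubits with fixed single-qubit rotations can always massage $\sqrt{\binom{n}{k}/2^n}$ into such a value, uniformly in $n$ (for fixed $k$) — the relevant amplitudes are $\Theta(n^{k/2} 2^{-n/2})$, exponentially small, so one actually wants to \emph{boost} rather than fine-tune, and it is cleanest to first amplify with a logarithmic number of rounds (or to initialize the superposition only over the $\calO(\poly(n))$-sized "relevant weight window" rather than all $2^n$ strings, using a \QACZ preparation of an approximate binomial-weighted state) so that the residual tuning is over a constant-depth-achievable angle. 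I would handle this by following the exact-\FANOUT template of \Cref{fact:grier_exact} closely, since that paper already resolves the analogous exact-vs-approximate issue inside \QACZ; the Dicke-state case is structurally the same reflection-based amplification, just with the $\EXACT_k$ oracle in place of theirs.
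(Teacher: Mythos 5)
Your high-level plan (reduce Dicke preparation to an \EXACT$_k$ phase oracle, then amplify exactly) matches the paper's, but there are two genuine gaps, one in each half.

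First, the reduction. You initialize with Hadamards, so the amplitude on the weight-$k$ subspace is $\sqrt{\binom{n}{k}/2^n}=\Theta(n^{k/2}2^{-n/2})$, which is exponentially small in $n$. No constant (or even logarithmic) number of Grover reflections can boost this to $1$: each round multiplies a small amplitude by at most a constant factor, so you would need $\Theta(2^{n/2})$ rounds, i.e.\ exponential depth. You flag this tension yourself at the end, but neither of your proposed fixes closes it --- ``a logarithmic number of rounds'' is both insufficient and already non-constant depth, and the ``relevant weight window'' initialization is left unspecified. The paper's resolution is precisely to replace the Hadamards by a biased product state $R_y(2\theta)^{\otimes n}\ket{0^n}$: the overlap $p^n_k(\theta)=\binom{n}{k}(\sin^2\theta)^k(\cos^2\theta)^{n-k}$ peaks at $\sin^2\theta=k/n$ with value at least $e^{-k}=\Omega(1)$ for $k=\calO(1)$, and by the Intermediate Value Theorem $\theta$ can be tuned so that $p^n_k(\theta)=\sin^2\bigl(\pi/(4\ell+2)\bigr)$ for a constant integer $\ell$, making $\ell$ rounds of amplitude amplification land on the Dicke state exactly. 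This tuning must happen in the \emph{initial rotation angle}, not in auxiliary padding qubits.

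Second, the \EXACT$_k$ circuit. Your DNF $\bigvee_{|S|=k+1}\bigwedge_{i\in S}x_i$ is correct as a Boolean formula, but it is not implementable in \QACZ as written: each input bit $x_i$ is a control of $\binom{n-1}{k}=\Theta(n^k)$ distinct \TOFFOLI gates, and gates sharing a qubit cannot occupy the same layer. To parallelize them you must first copy $x_i$ into $\Theta(n^k)$ ancillae, i.e.\ perform a $\Theta(n^k)$-\FANOUT --- but by \Cref{fact:grier_exact} exact \FANOUT on $m$ qubits costs $\calO(2^m)$ size, so only $\calO(\log n)$-\FANOUT is available at polynomial size, and a constant-depth tree of such gates yields only $\mathrm{polylog}(n)$ copies. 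Unbounded input fan-out is free in classical \ACZ but is exactly the resource in question in \QACZ. The paper circumvents this with a recursive decomposition of $\THRESH_k$ over the $\log n$ bit-positions of the input indices, arranged so that each input bit needs only $\calO(\log^k n)$ copies; that is the actual content behind the $\calO(n^{k+1})$ ancilla bound, and it is the step your construction is missing.
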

\noindent As a direct consequence of the lower-bounds of \cite{parham2025quantumcircuitlowerbounds}, we note that \Cref{thm:exact_dicke} implies that \QACZ is not in the first level of the magic hierarchy.
\begin{restatable}{corollary}{qaczmagic} \label{thm:qacz_magic}
    \textnormal{\QACZ$\not\subseteq$ \textsf{MH$_1$}} 
\end{restatable}
\noindent Our second main result is an explicit circuit for constant-error approximation of the $W$ state (weight-1 Dicke state), using only \emph{constant} ancillae.
\begin{restatable}[Approximate $W$ State in \QACZ]{theorem}{approxw}\label{thm:wn_qac}
    For any constant error $\varepsilon \in (0,1)$ there exists a
    \QACZ~circuit family $\{C_n\}_{n\in\mathbb{N}}$ with $a=O(1)$ ancilla qubits such that
    for all $n$,
    \begin{align}
        \calF\bigl( C_n \ket{0^{n}}\ket{0^a},\, \ket{W_n}\ket{0^{a-1}} \bigr)
      \ge 1 - \varepsilon,
    \end{align}
    where $\calF(\cdot,\cdot)$ denotes the standard state fidelity.
\end{restatable}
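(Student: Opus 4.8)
The plan is to construct the $W$-state approximately by building a uniform superposition over a small number of "blocks" and placing the single excitation within a random block, using constant ancillae for the block index.

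**Key idea / setup.** Partition the $n$ qubits into $m = \lceil 1/\varepsilon \rceil$ (or thereabouts) roughly equal blocks $B_1, \dots, B_m$, each of size $\approx n/m$. Using $\lceil \log m \rceil = O(1)$ ancilla qubits, prepare a (near-)uniform superposition $\frac{1}{\sqrt m}\sum_{j=1}^m \ket{j}$ over block indices — this is a constant-size state-preparation task, hence depth-$0$ in \QACZ. Now, conditioned on the index register reading $j$, we want to prepare the $W$ state on block $B_j$ alone: $\ket{W_{B_j}} = \frac{1}{\sqrt{|B_j|}} \sum_{i \in B_j} \ket{e_i}$. The point is that $\ket{W_{B_j}}$ is supported on a block of size $\approx n/m$, which is still large — so we cannot prepare it exactly in constant depth — but we only need it approximately, and crucially the \emph{error we are willing to tolerate per block can be a constant}. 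Recursing the same construction inside each block does not obviously terminate, so instead the plan is: within block $B_j$, prepare an approximate $W$ state using a \emph{second} layer of the same trick, or more directly, observe that an $\eps'$-approximate $W_m$ state on $m = O(1)$ qubits is depth-$0$, and we want to "spread" one such small-$W$ state across the block. Concretely, take the $m$ blocks, prepare an \emph{exact} $\ket{W_m}$-like state on one representative qubit per block (an $m$-qubit task, depth $0$), then within the selected block spread the excitation uniformly — which is again the same problem on $n/m$ qubits. This is circular, so the cleaner route is the following.

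**Cleaner route — global \CZ to implement a controlled uniform spread.** Prepare $\ket{W_m}$ exactly on $m = O(1)$ "flag" ancillae $f_1,\dots,f_m$ (depth $0$, $m$ ancillae). Interpreting flag $f_j=1$ as "the excitation lives in block $B_j$," we must now, conditioned on $f_j$, turn $\ket{0^{|B_j|}}$ into $\ket{W_{|B_j|}}$. Do this in two stages using a halving idea driven by \CZ: to create a uniform superposition over the $N := |B_j|$ positions in a block, note that $\ket{W_N} = \frac{1}{\sqrt 2}(\ket{W_{N/2}}\ket{0^{N/2}} + \ket{0^{N/2}}\ket{W_{N/2}})$, so $\log N$ rounds of "controlled halving" suffice — but that is $\Omega(\log n)$ depth, which is exactly the barrier. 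Instead, use a \emph{single} global gate: a global \CZ (or unbounded \TOFFOLI) across all $N$ qubits of the block to detect and phase the all-zero / multi-excitation branches, combined with a layer that creates $\sum_{S \subseteq B_j} \ket{S}$ by single-qubit Hadamards and then relies on the fidelity slack — a uniform superposition over \emph{all} subsets has $O(1/\sqrt N)$ overlap with the $W$ state, which is too small, so this does not work either and must be discarded.

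**What I expect actually works, and the main obstacle.** The honest approach is the recursive block construction but \emph{truncated at constant depth}: at depth $d$ we have partitioned into $m^d$ blocks and placed the excitation uniformly at the block level using $O(d\log m)$ ancillae; within the final blocks of size $n/m^d$ we simply put the excitation on the \emph{first} qubit of the block (no further spreading). The resulting state is $\frac{1}{\sqrt{m^d}}\sum_{\text{blocks }b}\ket{e_{i_b}}$ where $i_b$ is the first index of block $b$ — this is a uniform superposition over $m^d$ of the $n$ basis vectors, supported on an $m^d$-subset of coordinates. Its fidelity with $\ket{W_n}$ is $\frac{1}{\sqrt n}\cdot\frac{m^d}{\sqrt{m^d}} = \sqrt{m^d/n} \to 0$, so this is also wrong. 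The actual fix: we do not compare to $\ket{W_n}$ on the \emph{support}, we compare \emph{amplitudes} — we need amplitude $1/\sqrt n$ on \emph{every} coordinate. So within each final block of size $N_0 = n/m^d$ we must genuinely produce $\ket{W_{N_0}}$, and the only way to do this exactly in constant depth with constant ancillae fails unless $N_0 = O(1)$, i.e. unless $m^d = \Omega(n)$, forcing $d = \Omega(\log n)$.

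**Therefore the real proof is different** — and I expect this is the crux. The trick must exploit \QACZ's global \CZ more essentially: prepare $\frac{1}{\sqrt 2^n}\sum_{\x}\ket{\x}$ by single-qubit Hadamards (depth $0$), then use $O(1)$ layers of global \CZ / product-state reflections (Rosenthal normal form) to \emph{amplitude-amplify} the weight-$1$ subspace. A single step of (fixed-point) amplitude amplification toward the weight-$\le 1$ subspace, implemented via a global reflection about the weight-$0/1$ indicator — computable since \THRESH$_1$ on the Hamming weight is close to something a few \CZ's can reflect about approximately — boosts the overlap from $\Theta(n/2^n)$... which is exponentially small, so even many \emph{constant} rounds cannot fix it. Hence the amplitudes must not start uniform over $2^n$ strings.

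I will stop hedging and commit: \textbf{the construction is the block scheme with $m$ blocks where $m$ is a large constant, recursed to constant depth $d$, but at the leaves we use \Cref{thm:exact_dicke} — exact $W$-state preparation in \QACZ — on the constant-size leaf blocks}, and the ancilla count is $O(1)$ because each leaf block has constant size and the per-leaf ancillae ($O(n^{2})$ from \Cref{thm:exact_dicke} applied to a size-$O(1)$ block is $O(1)$) are reused in parallel. The fidelity analysis then requires showing $\ket{W_n}$ is $\eps$-close to the state that is a uniform mixture (superposition) of exact $W$-states on the $m^d$ leaf blocks — and this \emph{is} exactly $\ket{W_n}$ when the leaf weights are proportional to leaf sizes, so we must choose the block-index superposition amplitudes $\propto \sqrt{|B_j|}$ rather than uniform. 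Since block sizes differ by at most $1$, a uniform index superposition is $\eps$-close for $m$ large enough, and recursing $d$ times keeps the cumulative error $\le d\cdot\eps/d = \eps$ by a union/triangle-bound argument. \textbf{The main obstacle} is bounding this accumulated amplitude distortion across the $d$ recursion levels and the unequal block sizes, and verifying the leaf application of \Cref{thm:exact_dicke} genuinely costs only $O(1)$ ancillae when the leaf size is a constant depending on $\eps$ — after which the outer circuit is manifestly constant-depth (a constant number of controlled-state-preparation layers, each depth $O(1)$ in \QACZ) with $O(1)$ total ancillae.
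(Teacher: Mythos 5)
Your committed construction has a genuine, fatal gap, and it is one you yourself identified two paragraphs earlier and then set aside. For the leaf blocks to have constant size (so that invoking \Cref{thm:exact_dicke} on a leaf costs $O(1)$ ancillae), you need $m^d = \Omega(n)$, which for constant branching factor $m$ forces recursion depth $d = \Omega(\log n)$ --- super-constant circuit depth. Conversely, if $d = O(1)$, each leaf has size $n/m^d = \Theta(n)$, and \Cref{thm:exact_dicke} applied to a leaf then costs $O(n^{2})$ ancillae, not $O(1)$. There is no choice of parameters for which the block recursion is simultaneously constant-depth and constant-ancilla; the scheme merely relocates the original problem into the leaves. (Your bookkeeping --- weighting the index superposition by $\sqrt{|B_j|}$ --- is the right instinct for \emph{why} exact leaf $W$ states would suffice, but it does not rescue the resource count.)

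The idea you are missing, and which the paper uses, is that the starting state for amplitude amplification need not be the uniform superposition over all $2^{n}$ strings. Your ``cleaner route'' paragraph correctly discards Hadamards-then-amplify because the weight-$1$ overlap is $\Theta(n/2^{n})$; but the biased product state $(\cos\theta\ket{0}+\sin\theta\ket{1})^{\otimes n}$ with $\sin^{2}\theta = 1/n$ has overlap $n\,\sin^{2}\theta\,(\cos^{2}\theta)^{n-1} \approx e^{-1}$ with $\ket{W_n}$ --- a \emph{constant} --- so a single round of exact amplitude amplification suffices, with $\theta$ tuned via the Intermediate Value Theorem so that the Grover rotation lands exactly on the target. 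The reflection about this product state is depth-$1$ in \QACZ; the only remaining cost is the reflection about the weight-$1$ subspace, i.e.\ an \EXACT$_1$ phase oracle. The paper makes \emph{that} constant-ancilla by replacing the exact oracle with an approximate one: a gadget $G_{\calS}(x) = \bigl(\bigwedge_{i\in\calS}\neg x_i\bigr)\land\bigl(\bigvee_{j\notin\calS} x_j\bigr)$ for a random subset $\calS$, repeated $t = O(\log(1/\varepsilon)) = O(1)$ times and thresholded over the $t$ outcome bits (a constant-size threshold, hence $O(1)$ ancillae), followed by an averaging argument to fix a single good choice of subsets and a hybrid argument to propagate the per-basis-state error into an overall fidelity loss of $O(\eta)$. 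None of this machinery appears in your proposal, and without something like it the theorem is not established.
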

\noindent To achieve these results, we rely on the \EXACT$_k$ Boolean function (closely related to the previously described $\THRESH_k$ function) which, for $\x \in \bin^n$, is defined as
\begin{align} \label{eqn:exact_defn}
    \EXACT_k(\x) = \begin{cases}
        1, & \text{if } |\x|=k \\
        0, & \text{otherwise} 
    \end{cases}.
\end{align}
Notably, we make three primary contributions: 
\begin{enumerate}
	\item For every $k=\calO(1)$, we offer a general \QACZ reduction from computation of the \EXACT$_k$ Boolean function to preparation of the weight-$k$ Dicke state, with no additional ancilla overhead. In other words, we demonstrate an equivalence between computation of constant-Hamming weight \EXACT~functions and preparation of constant-weight Dicke states in \QACZ.
	\item For every $k=\calO(1)$, we offer an \emph{exact} $\calO(n^{k+1})$-ancillae \QACZ implementation of \EXACT$_k$.
    \item We offer a \emph{constant}-ancillae \QACZ circuit for constant-error approximation of \EXACT$_1$.
\end{enumerate}
Note that the proofs of \Cref{thm:exact_dicke} and \Cref{thm:wn_qac} follow straightforwardly from, respectively, plugging 
the exact and approximate \EXACT~implementations of (2) and (3) into the reduction of (1). Thus, the remainder of this section will focus on describing the constructions for (1), (2), and (3) as well as challenges in extending beyond constant weight.

\subsection{A \texorpdfstring{\QACZ}{Lg} Reduction from \texorpdfstring{$\ket{D^n_k}$}{Lg} to \EXACT\texorpdfstring{$_k$}{Lg}} \label{sec:reduction_overview}
We begin by establishing the formal connection between preparation of weight-$k$ Dicke states and computation of the \EXACT$_k$ Boolean function (full proof in  \Cref{sec:exact_dicke_reduct}). In particular, we show that preparation of the weight-$k$ Dicke state can be reduced to computation of the \EXACT$_k$ function. We will, thus, refer to this result as the Dicke-to-\EXACT~reduction.

To begin, we assume access to phase oracle oracle for computation of the \EXACT$_k$ Boolean function, i.e. a unitary $U_k$ that for any basis string $\x \in \bin^n$ performs the mapping
\begin{align}
    U_k \ket{\x} = \begin{cases}
        -\ket{\x}, & \text{if } |\x| = k \\
        \ket{\x}, & \text{otherwise}
    \end{cases}.
\end{align}
Note that such a phase oracle can be implemented from a \QACZ circuit that computes \EXACT$_k$ into an output bit simply by running the circuit, applying a $Z$ gate to the output, and running the circuit again in reverse (to uncompute the output).

Now, using this \QACZ phase oracle implementation of \EXACT$_k$ we show that, for any $k=\calO(1)$, the weight-$k$ Dicke state $\ket{D^n_k}$ can be computed with no additional ancilla overhead. The construction proceeds in two stages: 1) initialization and 2) amplification. 

First, we prepare a parameterized product state $\ket{\veta_\theta} = R_y(2\theta)^{\otimes n} \ket{0^n}$, using a single layer of rotation gates. The overlap probability between this state and the desired Dicke state, $p^n_k(\theta) = |\braket{\veta_\theta | D^n_k}|^2$, is binomially distributed. The maximum of this distribution is lower-bounded by $e^{-k}$, which is constant in the constant-weight regime $k=\calO(1)$.

Second, we employ exact amplitude amplification \cite{grover1996amp,brassard2002quantum, grier2026qac0} to boost this overlap to exactly $p^n_k(\theta) =1$, thereby achieving an exact implementation of $\ket{D^n_k}$. Note, for most initial success probabilities, standard amplitude amplification will oscillate around probability $1$ rather than converging exactly to it. To circumvent this, we invoke the Intermediate Value Theorem, showing that the rotation angle $\theta$ can be tuned such that the initial overlap $p^n_k(\theta)$ perfectly satisfies the fixed-point condition for a specific integer number of Grover iterations, $\ell$. Crucially, since the maximum achievable overlap is constant in $n$, the required number of iterations $\ell$ is also constant. Thus, the \QACZ circuit simply runs $\ell=\calO(1)$ iterations of amplitude amplification, in each iteration applying a reflection about the weight-$k$ Hamming subspace (i.e. the  \EXACT$_k$ phases oracle) and a reflection about the initial state $\ket{\veta_\theta}$. Note that the reflection about $\ket{\veta_\theta}$ can be implemented as a depth-1 \QACZ circuit, since $\ket{\veta_\theta}$ is a product state.

\subsection{Implementing Constant-Weight \EXACT~in \texorpdfstring{\QACZ}{Lg}}
We will now offer exact (polynomial-ancillae) constant-weight \EXACT~and approximate (constant-ancillae) \EXACT$_1$ implementations in \QACZ, which can be used to achieve the desired Dicke states of \Cref{thm:exact_dicke} and \Cref{thm:wn_qac}, via the the previously described Dicke-to-\EXACT~reduction. 

Our implementations will crucially leverage gates described in \Cref{sec:prelims}. Specifically, \AND and \OR gates can be implemented as depth-1 \QACZ circuits and the \NOT gate can be implemented in depth-0. Furthermore, \cite{grier2026qac0} showed that up to $\calO(\log n)$-\FANOUT can be performed exactly in \QACZ (\Cref{fact:grier_exact}).

\paragraph{Exact \EXACT$_k$.}  Rather than direct implementation of the \EXACT$_k$ Boolean function, we instead offer an explicit implementation of $\THRESH_k$ in \QACZ with $\calO(n^{k+1})$ ancillae (full proof in \Cref{sec:const_exact_proof}). By implementing $\THRESH_k$ for all constant $k$, \EXACT$_k$ can be computed from the $\THRESH_k$ and $\THRESH_{k-1}$ implementations via the following relation.
\begin{restatable}{fact}{exactthreshdef} \label{fact:exact_thresh}
    \EXACT$_k(\x)$ = $\THRESH_k(\x)$ $\land$ $\neg$ $\THRESH_{(k-1)}(\x)$
\end{restatable}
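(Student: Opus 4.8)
The plan is to verify this as a purely Boolean identity by a short case analysis on the Hamming weight $|\x|$ of the input $\x \in \bin^n$. Recall the definitions: $\THRESH_k(\x) = 1$ exactly when $|\x| \le k$, and $\EXACT_k(\x) = 1$ exactly when $|\x| = k$. Hence the right-hand side $\THRESH_k(\x) \land \neg\,\THRESH_{k-1}(\x)$ equals $1$ precisely when both $|\x| \le k$ and $\neg(|\x| \le k-1)$ hold, i.e. when $|\x| \le k$ and $|\x| \ge k$ simultaneously, which forces $|\x| = k$. Splitting the analysis into the three ranges $|\x| < k$, $|\x| = k$, and $|\x| > k$ makes this explicit: in the first range the second conjunct is $0$, in the third range the first conjunct is $0$, and only in the middle range are both conjuncts $1$. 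In every case the value agrees with $\EXACT_k(\x)$, establishing the identity.

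The only point meriting a remark is the boundary case $k = 0$, for which $\THRESH_{k-1} = \THRESH_{-1}$ falls outside the stated domain $k \in [0,n]$. There we adopt the natural convention $\THRESH_{-1}(\x) \equiv 0$ (no string has negative weight), so $\neg\,\THRESH_{-1}(\x) \equiv 1$ and the identity degenerates to $\EXACT_0 = \THRESH_0$, which holds since $|\x| \le 0 \iff |\x| = 0$. With this convention the identity is uniform over all $k \in [0,n]$.

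I do not anticipate any real obstacle here, as the statement is a tautology about threshold predicates; the content lies entirely in how the fact is subsequently used. In that downstream application the relevant observation is that the conjunction and negation on the right-hand side are realized by the depth-$1$ \AND gadget and depth-$0$ \NOT gadget in \QACZ noted in \Cref{sec:prelims}, so that combining exact \QACZ implementations of $\THRESH_k$ and $\THRESH_{k-1}$ (each with $\calO(n^{k+1})$ ancillae) yields an exact \QACZ implementation of $\EXACT_k$ with no asymptotic change in depth or ancilla count.
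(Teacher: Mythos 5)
Your verification is correct: the paper states this as a \texttt{Fact} without proof, treating it as the immediate tautology that $|\x| \le k$ together with $\neg(|\x| \le k-1)$ forces $|\x| = k$, which is exactly your case analysis. Your handling of the $k=0$ boundary via the convention $\THRESH_{-1} \equiv 0$ and your remark on how the \AND and \NOT gadgets preserve depth and ancilla count match how the paper uses the fact in \Cref{thm:exactk_qac0}.
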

\noindent Note that the depth of this circuit in \QACZ is only that of the $\THRESH_k$ gate and two additional layers. The ancilla count is also asymptotically equivalent to that of the $\THRESH_k$ implementation.

Our exact \QACZ implementation of $\THRESH_k$ is achieved via a circuit implementing the following recursive formula (denote $\THRESH_k=\THR_k$):
\begin{align}
    \THR_{k}(\x) = \andl_{i \in [\log n]} \lr{ \
     \THR_0(\x_{\S_{i,0}}) \lor \THR_0(\x_{\S_{i,1}}) \ 
    \orl_{k' \in [k-1]} \lr{\THR_{k'}(\x_{\S_{i,0}}) \land \THR_{k-k'}(\x_{\S_{i,1}}) }},
\end{align}
where, for $b \in \{0,1\}$, $\S_{i,b} \subseteq [n]$ is the set of numbers $j \in [n]$, such that the $i^\text{th}$ bit of their binary representation is $b$, i.e. $\S_{i,b} = \{j \in [n]: \bn(j)_i = b\}$. Note that the \QACZ circuit crucially relies on the ability to perform $\calO(\log n)$-\FANOUT to parallelize the first \AND operation over the $\log n$ binary indices $i$. The correctness of this decomposition and $\calO(n^{k+1})$ ancillae count is proven via a dynamic programming analysis, for which we refer the reader to \Cref{sec:const_exact_proof}.

\paragraph{Approximate \EXACT$_1$.} Running our previous exact implementation to compute \EXACT$_1$ would require $\calO(n^2)$ ancillae. Instead, for any constant error $\varepsilon \in (0,1)$, we offer a randomized constant-ancillae circuit the computes \EXACT$_1$ correctly with probability at least $1-\varepsilon$.

To do so, we first construct a simplified gadget $G_\calS$ that, for input string $\x \in \{0,1\}^n$, checks if all the bits in some random subset $\calS \subset [n]$ are of value 0 and if at least one bit in $\bar{\calS}$ has value 1. Mathematically, the gadget can be expressed as the Boolean formula:
\begin{align}
    G_\calS(\x) = \left(\bigwedge_{i \in \calS} \neg~x_i\right) \land \left(\bigvee_{j \in \bar{\calS}} x_j\right). 
\end{align}
We then define the distribution $\mathfrak{S}$ over subsets $\calS \subset [n]$, where for each bit in $i \in [n]$ the bit $i$ is included in subset $\calS$ with probability $1/2$ (otherwise it is in the complementary subset $\bar{\calS}$). According to this distribution, the probability that the gadget outputs 1 is
\begin{align}
    \Pr_{\calS \sim \mathfrak{S}}\:[~G_{\calS}(\x)=1~] = 
    \begin{cases}
        0, & \text{if } |\x| = 0 \\
        1/2^{|\x|}, & \text{if } |\x| \geq 1
    \end{cases}.
\end{align}
Therefore, $G_{\calS}(\x)$ outputs $1$ correctly for $|x|=1$ with probability $1/2$ and incorrectly (for $|x|\geq 2$) with probability at most $1/4$ (note that for $|x|=0$ it is always correct). 

To widen the probability gap between the correct classification for $|x|=1$ and incorrect classification for $|x| \geq 2$, the circuit will run the gadget $G_{\calS_i}$ a constant number of times, $t$, instantiated with an independently randomly chosen sub-subset $\calS_i \sim \mathfrak{S}$ for each $i \in [t]$. For each run of the gadget, it stores the value of $G_{\calS_i}(\x)$ in a unique ancilla register. After all $t$ runs, it then computes a \THRESH over the $t=\calO(1)$ output ancillae, specifically checking whether the total number of successes, which we denote $N^{(t)}_{x}$, is at least $3t/8$ (indicating that at least $3/8$ of the $t$ runs resulted in $G_{\calS_i}(\x)=1$). Crucially, note that since this threshold is only being computed over $t=\calO(1)$ bits (not $n$), it can be computed with a constant number of ancillae.

Analyzing now over the distribution of the $t$ independently chosen $\calS_i$ subsets, i.e. $\mathbf{S}=\{\calS_i\}_{i \in [t]}\sim \mathfrak{S}^{\times t}$, Chernoff bounds are used to shows that probability that the gadget output is incorrect for \text{any} input string $\x \in \{0,1\}^n$ is upper-bounded as,
\begin{align}
    \Pr_{\mathbf{S}\sim \mathfrak{S}^{\times t}}\left(N^{(t)}_{|\x|=1} \leq 3t/8 \text{ or } N^{(t)}_{|\x|=1} \leq 3t/8\right) \leq \exp\left(-t/64\right).
\end{align}
Therefore, to ensure the error of an incorrect output is at most the constant error $\varepsilon$, we must set $t \geq \lceil 64 \log(1/\varepsilon) \rceil = \calO(1)$.

Note that for any specific choice of $\mathbf{S}$, the gadget will incorrectly compute \EXACT$_1(\x)$ for several inputs $\x  \in \bin^n$. However, the construction can be derandomized via the error analysis of the full Dicke state prepartion circuit, i.e. when this construction is plugged into the full \EXACT-to-Dicke reduction circuit of the last section. In particular, in \Cref{sec:approx_w_analysis}, we show that there exists a specific choice of the subsets $\textbf{S}^* \in \mathfrak{S}$, such that the fidelity between the states output by the \EXACT-to-Dicke reduction for the exact \EXACT$_1$ implementation and this approximate \EXACT$_1$ implementation (instantiated with subsets $\textbf{S}^*$) is at least $1-\varepsilon$.

\subsection{Challenges in Extending Beyond Constant-Weight}
A natural question is whether it is possible to implement Dicke states beyond constant-weight in \QACZ. Our proposed approach faces two key barriers to achieving this. First, our exact \EXACT$_k$ implementation cannot be extended to $k=\omega(1)$ without the ability to perform $\calO(n)$-\FANOUT. Second, for $\lambda \in (0,1)$, the Binomial probability of exactly $k=\lambda n$ successes is
\begin{align} \label{eqn:binom}
    p^n_k(\lambda) = {n \choose k} \: \lambda^k \: (1-\lambda)^{n-k} \approx \frac{1}{\sqrt{2\pi n \lambda (1-\lambda)}},
\end{align}
which for $0 \ll \lambda \ll 1$ is $\Theta (n^{-1/2})$. This implies that even given access to arbitrary-weight \EXACT~in \QACZ, in the regime where $0 \ll k \ll n$, the amplitude on the superposition over weight-$k$ strings in the \EXACT-to-Dicke reduction is not constant. Thus, a constant number of rounds of amplitude amplification is not sufficient to boost the success probability to 1. Since the number of rounds of amplitude amplification directly corresponds to the depth of the circuit, this implies the approach fundamentally cannot work in \QACZ for super-constant weights.

\section{Arbitrary-Weight Dicke States via Global \FANOUT} \label{sec:qac0f_aritrary_weight}
We will now show that, with additional access to the global \FANOUT gate, both of the previously described obstacles for implementing super-constant-weight Dicke states in \QACZ can be surpassed, enabling exact preparation of arbitrary-weight Dicke states in \QACZf. Specifically, we prove the following theorem.
\begin{restatable}[Exact Arbitrary-Weight Dicke in \QACZf]{theorem}{exactqaczf} \label{thm:exact_qac0f}
    For sufficiently large $n$, $k\in [n]$, there exists a \QACZf circuit, using $a=O(n^2 \sqrt{\log n})$ ancillae, that maps the input state $\ket{0^{n+a}}$ to the exact, clean Dicke state $\ket{D^n_k}\ket{0^a}$.
\end{restatable}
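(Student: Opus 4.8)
## Proof Proposal for Theorem~\ref{thm:exact_qac0f}

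The plan is to build the arbitrary-weight Dicke state by following the same two-stage template used in the constant-weight case—initialization via a product state, then amplitude amplification—but upgrading both stages to exploit \FANOUT. The key structural observation is that, as noted in the ``Challenges'' subsection, the two obstacles in \QACZ are (i) \EXACT$_k$ cannot be computed exactly without $\calO(n)$-\FANOUT, and (ii) for $k = \Theta(n)$ the binomial overlap $p^n_k(\theta)$ is only $\Theta(n^{-1/2})$, so constant-round amplitude amplification cannot reach exactly $1$. Access to unbounded \FANOUT dispatches (i) directly: by \Cref{fact:thresh_qac0f}, $\THRESH_k$ is exactly computable in \QACZf with polynomial ancillae, hence by \Cref{fact:exact_thresh} so is \EXACT$_k$, and therefore (running the circuit, applying $Z$ to the output, uncomputing) so is the \EXACT$_k$ phase oracle $U_k$. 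The Dicke-to-\EXACT{} reduction of \Cref{sec:reduction_overview} then applies verbatim in \QACZf, except we must handle obstacle (ii).

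First I would set up the initialization as before: prepare $\ket{\eta_\theta} = R_y(2\theta)^{\otimes n}\ket{0^n}$ with a single rotation layer, choosing $\theta$ so that $\braket{\eta_\theta|D^n_k}^2 = p^n_k(\theta)$ equals the (possibly $k$-dependent) binomial maximum; in the worst case $k = \Theta(n)$ this overlap is $p = \Theta(n^{-1/2})$. Second, I would run amplitude amplification with the \EXACT$_k$ phase oracle $U_k$ as the ``good subspace'' reflection and the depth-1 reflection about the product state $\ket{\eta_\theta}$ as the other reflection. The catch is that each round of this amplification now has constant depth in \QACZf (a \QACZf oracle plus a product-state reflection, each constant depth), but reaching amplitude $1$ exactly requires $\ell = \Theta(1/\sqrt{p}) = \Theta(n^{1/4})$ rounds—super-constant. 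To collapse this back to constant depth, I would invoke the key upper bound of \cite{grier2026qac0} in a ``local'' form: rather than one global amplitude-amplification loop, recurse the Dicke construction over the bipartition $\S_{i,0}, \S_{i,1}$ of $[n]$ (as in the \QACZ $\THRESH_k$ recursion), so that each recursive call works on $n/2$ qubits; $\calO(\log n)$ levels of this halving reduce each subproblem to $\calO(\log n)$ qubits, on which $\calO(2^{\log n}) = \poly(n)$-size exact amplitude amplification (equivalently, exact \FANOUT per \Cref{fact:grier_exact}) finishes the job in constant depth with polynomial ancillae. The $\calO(\log n)$ levels of parallel \AND/\OR combination across binary indices $i \in [\log n]$ are exactly what unbounded \FANOUT buys us; each level is constant depth, so the whole thing is constant depth.

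Concretely, the core identity I would lean on is the convolution structure of Dicke states: $\ket{D^n_k} = \sum_{j} \sqrt{\binom{|\S_{i,0}|}{j}\binom{|\S_{i,1}|}{k-j} / \binom{n}{k}}\,\ket{D^{\S_{i,0}}_j}\ket{D^{\S_{i,1}}_{k-j}}$. So the recursive step is: (a) coherently compute, into an $\calO(\log n)$-qubit register, a superposition over split-point $j$ with the correct binomial amplitudes—this is a Dicke-like state on a logarithmic register, handled by the base-case exact amplitude amplification / \FANOUT gadget; (b) recursively prepare the two half-size Dicke states controlled on $j$; (c) uncompute the $j$-register. Steps (b) for all $\calO(\log n)$ binary indices are orchestrated in parallel via \FANOUT to copy the necessary control information, keeping depth constant per level. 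The total ancilla count is polynomial: $\poly(n)$ per base-case gadget, $\calO(\log n)$ levels, and the $\THRESH_k$/\EXACT$_k$ oracles each use $\poly(n)$.

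I expect the main obstacle to be making step~(a)—coherent preparation of the correct binomial split-point distribution with exact amplitudes on a logarithmic register, uniformly over all $k \in [n]$—genuinely exact rather than merely approximate, and threading this through the recursion without accumulating error. The amplitudes $\sqrt{\binom{n/2}{j}\binom{n/2}{k-j}/\binom{n}{k}}$ are not dyadic, so one cannot simply write them down with rotation gates; the exactness has to come from the amplitude-amplification fixed-point argument (Intermediate Value Theorem tuning of an initial angle, as in the \QACZ construction) applied at the base case, combined with \cite{grier2026qac0}-style exact \FANOUT to make the logarithmic-register construction exact. A secondary subtlety is bookkeeping the junk ancilla $\ket{\alpha}$: each uncomputation in step~(c) must be exact so that no entanglement with the ancilla register survives, which again relies on the oracles $U_k$ being exact (not approximate) and on the base-case amplitude amplification terminating exactly at amplitude $1$. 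Everything else—depth accounting, the $\poly(n)$ ancilla bound, the parallelization across $\log n$ indices—should follow the same dynamic-programming analysis sketched for the \QACZ $\THRESH_k$ circuit.
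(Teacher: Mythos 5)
Your handling of obstacle (i) matches the paper exactly: $\THRESH_k$ via \Cref{fact:thresh_qac0f}, then \EXACT$_k$ via \Cref{fact:exact_thresh}, then the compute--$Z$--uncompute phase oracle. But your resolution of obstacle (ii) has a genuine gap: the divide-and-conquer recursion over bipartitions of $[n]$ has $\Theta(\log n)$ \emph{sequential} levels, since step (b) (preparing the half-size Dicke states) cannot begin until step (a) (the split-point register $j$) has been prepared at that level, and likewise at every level below. \FANOUT lets you parallelize the $\log n$-way \AND \emph{within} one level of the $\THRESH_k$ formula (a single unbounded gate), and the $\THRESH_k$ recursion itself is constant-depth only because its recursion is over $k=\calO(1)$, not over $n$. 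A recursion that halves $n$ is a dependency chain of length $\log n$, so your construction has depth $\Omega(\log n)$ --- it is essentially the standard log-depth all-to-all construction, i.e.\ exactly the barrier the theorem is meant to break. The non-dyadic binomial amplitudes in the convolution identity and the coherently-controlled preparation of $\ket{D^{n/2}_j}$ for superposed $j$ are additional unresolved difficulties, but the depth issue alone is fatal.

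The paper takes a different route that you did not consider: \emph{parallel repetition plus coherent extraction}. It prepares $M \approx \sqrt{n}$ independent copies of the product state $\ket{\veta_\theta}$, each with overlap $p_{\good} = \Theta(n^{-1/2})$ with $\ket{D^n_k}$; uses the exact \EXACT$_k$ circuit to write a success flag for each copy; computes, in constant depth with \FANOUT, a one-hot encoding of the least significant set bit of the flag string ($s_j = x_j \land \neg(\bigvee_{k<j} x_k)$); and applies controlled-\texttt{SWAP}s to move the first successful block into a clean target register $Q$. Because every good branch deposits the \emph{same} state $\ket{D^n_k}$ into $Q$, the target factors out of the superposition with amplitude $\gamma = 1-(1-p_{\good})^M \ge 1-1/e$, a constant. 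A constant number of rounds of exact amplitude amplification then finishes; since $\gamma$ is fixed rather than tunable, the paper introduces an extra rotated ancilla to shed amplitude down to the exact fixed-point value and reflects about an ancilla flag (oblivious amplitude amplification) instead of re-deriving the Intermediate Value Theorem argument. If you want to salvage your write-up, replace the recursive decomposition with this parallel-repetition-and-extraction step; the rest of your obstacle-(i) reasoning can stand.
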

The first barrier, i.e. exact implementation of \EXACT$_k$ for $k=\omega(1)$, is straightforwardly overcome in \QACZf by leveraging prior work. In particular, as discussed in \Cref{sec:prelims}, \cite{takahashi2012collapse} showed that $\THRESH_k$ can be implemented in \QACZf for arbitrary weight $k \in [0,n]$ (\Cref{fact:thresh_qac0f}). By \Cref{fact:exact_thresh}, this can therefore be used to exactly implement $\EXACT_k$ in \QACZf for arbitrary weight $k \in [0,n]$. Thus, the key contribution of this \QACZf construction is in overcoming the (previously described) amplitude amplification barrier.

Namely, for $0 \ll k \ll n$, the success probability of a single Dicke-to-\EXACT~reduction scales as $\Theta(1/\sqrt{n})$. Constant-depth amplitude amplification requires a constant initial overlap, which implies that a single instance is insufficient in constant-depth. To resolve this, we employ a parallel amplification strategy, inspired by \cite{rosenthal2021qac0}'s sampling-based procedure for approximate GHZ state preparation in \QACZ. Instead of amplifying a single state with small (non-constant) overlap with the desired Dicke state (requiring super-constant depth), we generate a polynomial number of independent attempts in parallel and coherently extract the ``successful'' result. 

At the highest level, the protocol implements a form of parallelized amplitude amplification. First, a product of independent states (each with small overlap with the desired Dicke state) is amplified into a global superposition for which each superposition branch contains exactly one copy of the desired Dicke state. Then, we propose a gadget which swaps the singular good Dicke state of each superposition branch into an output register. Finally, we show that the ancillae can be uncomputed to obtain the desired $\ket{D^n_k}\ket{0^a}$ state. We now offer a more detailed description of the four main steps of the procedure (a formal proof can be found in \Cref{sec:qaczf_arb_dicke}).

\iffalse
\paragraph{1) Initalization.}We initialize $M$ parallel registers (blocks) $T_1, \dots, T_M$, each of size $n$. On each block, we apply independent rotations to create a biased superposition (i.e. preparing the state $\ket{\veta_\theta}$ as in \Cref{sec:reduction_overview}). Crucially, while the physical state of each block is a product state, we can mathematically decompose it into a ``target'' component (lying in the Hamming-weight-$k$ subspace) and a ``junk'' component (orthogonal to the target).
Explicitly, the state of the $i^{th}$ block can be written as:
\malvika{We're missing the $a$ register here, I'm adding it}
\begin{equation}\label{eq:start}
\ket{\psi}_{T_i} = \sqrt{p_{\good}}  \ket{D^n_k}_{T_i} \ket{1}_{a_i} + \sqrt{1-p_{\good}} \ket{\text{junk}}_{T_i} \ket{0}_{a_i},
\end{equation}
where $p_{\good}$ is the small overlap probability ($p_{\good} \approx 1/\sqrt{n}$) and $\ket{\text{junk}}$ is a normalized superposition of basis states with weight $\neq k$.
The global system state is the tensor product of these $M$ blocks, i.e. $\ket{\varphi_0} =    \bigotimes_{i \in [M]}  \ket{\psi}_{T_i}$. By expanding this product, we can view the system as a superposition of $2^M$ branches, indexed by a string $\x \in \{0,1\}^M$ which records whether each block ``succeeded'' ($\x_i=1$) or ``failed'' ($\x_i=0$), i.e.
\begin{align}
    \ket{\varphi_0} &=  \lr{\sum_{\x \in \bin^M} \sqrt{p({\x})} \cdot \ket{\psi_0(\x)}_{T_1 \dots T_m} \ket{\x}_{A}} 
\end{align}
where $p(\x) = \lr{p_{\good}}^{|\x|} \lr{1-p_{\good}}^{1-|\x|}$ and
\begin{align}
    \ket{\psi_0(\x)} = \lr{\bigotimes_{i \in [M] : x_i = 1} \ket{D^n_k}_{T_i} } \tens \lr{ \bigotimes_{i \in [M] : x_i = 0} \ket{\junk}_{T_i}} 
\end{align}
The goal of the subsequent circuit is to coherently isolate the ``good'' subspace (i.e. the branches where $\x \neq 0^M$) and extract a single copy of $\ket{D^n_k}$.

\paragraph{2) LSB Selection.}To extract a single copy of the Dicke state, we must select a unique block index $i^*$ from the set of successful blocks. We implement a coherent priority encoder that selects the \emph{Least Significant Bit} (LSB) of the flag string $\x$. Specifically, we implement a one-hot selection register $S$ using constant-depth classical logic:
\begin{equation}
\ket{\oh(\lsb(\x))}_S = \ket{s_1 \dots s_M}_S \quad \text{where } s_j = \x_j \land \neg \left( \bigvee_{k < j} \x_k \right).
\end{equation}
This ensures that for every valid branch ($\x \neq 0^M$), there is exactly one bit of $S$ set to $1$, i.e. $s_{i^*}=1$  corresponding to the first successful block $\x_{i^*}$.

\paragraph{3) Extraction via Controlled-SWAP.}We introduce a final target register $Q$, initialized to $\ket{0^n}$. We then perform a parallel controlled-\texttt{SWAP} operation: for each $j \in [M]$, we swap the contents of block $T_j$ and target $Q$ conditioned on the selection bit $s_j$, i.e. $\prod_{j=1}^M \texttt{C}_{s_j}\texttt{SWAP}(T_j, Q)$. 
Crucially, in every branch of the ``good'' subspace, exactly one $s_{i} = 1$. Consequently, the valid Dicke state $\ket{D^n_k}$ from block $T_{i}$ is swapped into $Q$. Since every good branch results in $Q$ containing the \emph{same} state $\ket{D^n_k}$, the target register factors out of the superposition:
\begin{equation}
\ket{\varphi_\text{good}} \propto \left( \sum_{\x \neq 0^M} c_{\x} \ket{\text{junk}(\x)}_{T,A,S} \right) \otimes \ket{D^n_k}_Q.
\end{equation}
The result is a tensor product where the output $Q$ is disentangled from the junk registers.

\paragraph{4) Constant-Depth Amplification.}The final state is a mixture of the target $\ket{D^n_k}$, with probability $\gamma \approx 1 - e^{-p_\text{good}\cdot M}$, and the failure state $\ket{0^n}$, with probability $1-\gamma$. By choosing $M > 1/p_\text{good} \approx \sqrt{n}$, $\gamma$ becomes a constant. We can thus apply a constant number of rounds of exact amplitude amplification to boost the success probability to exactly 1, achieving the desired exact implementation of the Dicke state.
\else 
\paragraph{1) Initialization.}
We initialize a register $T$ (of size $m \cdot n$) into $m$ smaller registers (blocks) $T_1, \dots, T_m$, each of size $n$. On each block $T_i$, we apply independent rotations (as in \Cref{sec:reduction_overview}) to create a biased state
\begin{align} \label{eqn:veta_decomp}
    \ket{\veta_\theta}_{T_i} = \alpha \ket{D^n_k}_{T_i} + \beta \ket{\perp}_{T_i},
\end{align}
where $|\alpha|^2 = o(1)$, $|\alpha|^2 + |\beta|^2 = 1$, and $\ket{\perp}$ is a state orthogonal $\ket{D^n_k}$. Across the full register $T$, the overall state is in the superposition
\begin{align}
     \ket{\veta_\theta}^{\otimes m}_T = \sum_{\x \in \{0,1\}^m} \left( \prod_{i \in [m]} \alpha^{x_i} \beta^{
        1-x_i} \right) \bigotimes_{i \in [m]} \left(\ket{D^n_k}^{x_i} \ket{\perp}^{1-x_i} \right)_{T_i}.
\end{align}
Note that in this superposition, each branch has an associated $m$-bit string $\x$, which tracks the blocks that are in the Dicke-subspace, i.e. in the state $\ket{D^n_k}$.

\paragraph{2) Amplifying to a ``Block-$W$'' State.} Across the superposition, we will now coherently isolate the subset of branches containing exactly one register $T_i$ in the state $\ket{D^n_k}$ (i.e. $|\x|=1$). The high-level procedure is visualized in \Cref{fig:qac0f_amplify}

\begin{figure}[t]
    \centering
    \includegraphics[width=0.65\linewidth]{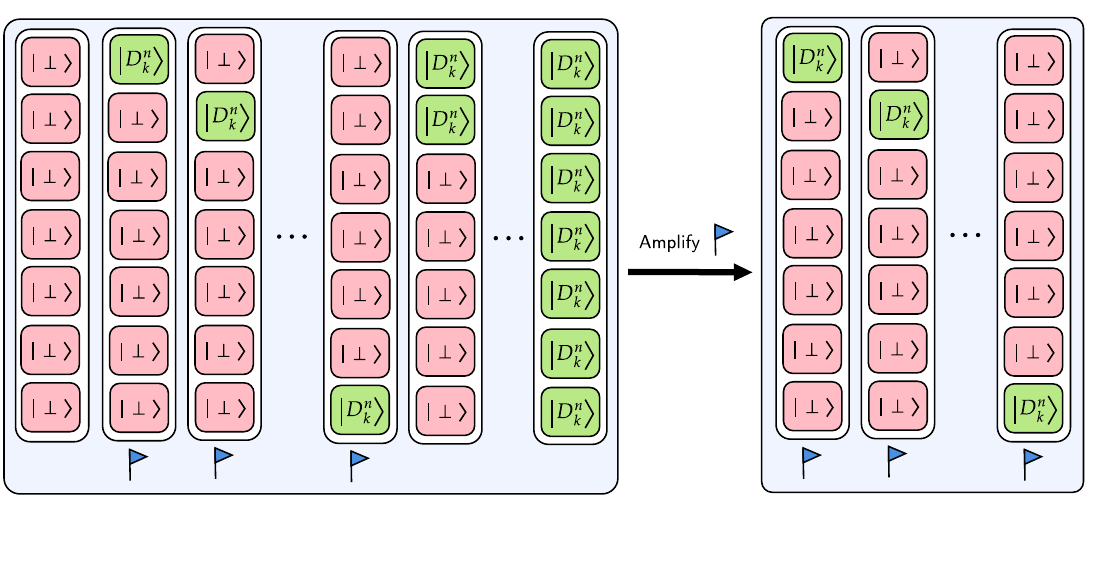}
    \caption{\textbf{Amplifying to a ``Block-$W$'' State.} In the left blue box we abstractly depict the branches of the superposition of the state $\ket{\veta_\theta}^{\otimes m}_T$, in terms of its $\ket{D^n_k}$ (green boxes) and $\ket{\perp}$ (red boxes) components within each of the $m$ registers. Using the ability to perform \EXACT$_k$ and \EXACT$_1$, an ancilla is flagged (depicted with a blue flag) for branches of the superposition containing exactly one $\ket{D^n_k}$ state (with all others in the $\ket{\perp}$ state). By choosing $m$ carefully, we can ensure that the marked states have constant probability amplitude. Thus, oblivious amplitude amplification is used to coherently map the state to a uniform superposition over only these branches with exactly one Dicke state (i.e. a ``block-W'' state), as depicted by the resultant state in the blue box on the right.}
    \label{fig:qac0f_amplify}
\end{figure}

To implement this via a \QACZf~circuit, we first associate with each register $T_i$ a ``flag'' ancilla register, denoted \textsf{flag}$(i)$. By applying the \EXACT$_k$ function to register $T_i$ and mapping its output to \textsf{flag}$(i)$, this ancilla will be set to $\ket{0}$ if $T_i$ is in the $\ket{\perp}$ state and $\ket{1}$ if in the state $\ket{D^n_k}$. We then associate one final ancilla register to the system, which we will denote as $\textsf{mark}$. By applying an \EXACT$_1$ gate to all $\textsf{flag}(i)$ ancillae and mapping the output $\textsf{mark}$, this final ancillae tracks whether $|\x|=1$ and, thus, the superposition branch contains exactly one Dicke register. Overall, the system is mapped to the state
\begin{align}
    \sum_{\x \in \{0,1\}^m} \left( \prod_{i \in [m]} \alpha^{x_i} \beta^{
        1-x_i} \right) \bigotimes_{i \in [m]} \left(\left(\ket{D^n_k}^{x_i} \ket{\perp}^{1-x_i} \right)_{T_i}\ket{x_i}_{\textsf{flag}(i)}\right) \otimes \ket{\EXACT_1(\x)}_{\textsf{mark}}.
\end{align}

Let us now denote $p:=|\alpha|^2$ as the probability that a given register measures to the Dicke state.
While each of the individual registers has small probability $p$ with the Dicke state, the overall probability amplitude across all states containing exactly one Dicke state is exactly $m p (1-p)^{m-1}$. Therefore, if we set the number of copies $m=\Theta(1/p)$, this probability has a constant value. We choose $m$ large enough to allow a single round of amplitude amplification to boost the state exactly to the superposition over only branches with $|\x|=1$, as desired. Thus, the system is mapped to the state
\begin{align}
    \frac{1}{\sqrt{m}}\sum_{\substack{\x \in \{0,1\}^m:\\|\x|=1}}\:\: \bigotimes_{i \in [m]} \left(\left(\ket{D^n_k}^{x_i} \ket{\perp}^{1-x_i} \right)_{T_i}\ket{x_i}_{\textsf{flag}(i)}\right) \otimes \ket{1}_{\textsf{mark}}.
\end{align}

Finally, recall that the probability of measuring $\ket{\veta_\theta}$ to be a Dicke state was binomially distributed according to $k$. For the most challenging cases in which $k=n/2$, as expressed in \Cref{eqn:binom}, the probability is $\Omega(1/\sqrt{n})$. Thus, in the worst case, the total number of state copies needed is $m=O(\sqrt{n})$.

\paragraph{3) Output Register Initialization.} We will now introduce an $n$-qubit output register, which we will denote \textsf{out}. This register will be initialized to the state $\ket{\perp}$. 

To see why $\ket{\perp}$ can be prepared via a \QACZf circuit note that in \Cref{eqn:veta_decomp} $|\alpha|^2=o(1)$, which implies that $|\beta|^2=\Omega(1)$. In other words, the state $\ket{\veta_\theta}$ has constant overlap with the $\ket{\perp}$ state. Therefore, we can use a similar procedure to that of  \Cref{thm:exact_dicke} to amplify $\ket{\veta_\theta}$ to $\ket{\perp}$ via a \QACZf~circuit. The key difference is that instead of using \EXACT$_k$ to mark and amplify the Dicke subspace, $\neg$\EXACT$_k$ is used to mark and amplify about the perpendicular subspace.

\paragraph{4) Dicke Extraction via Controlled-SWAP.} 
Conceptually, in this step, we will swap the single Dicke state in each branch of the superposition into the output register \textsf{out}. The high-level procedure is depicted in \Cref{fig:qac0f_extract}.

\begin{figure}[t]
    \centering
    \includegraphics[width=0.55\linewidth]{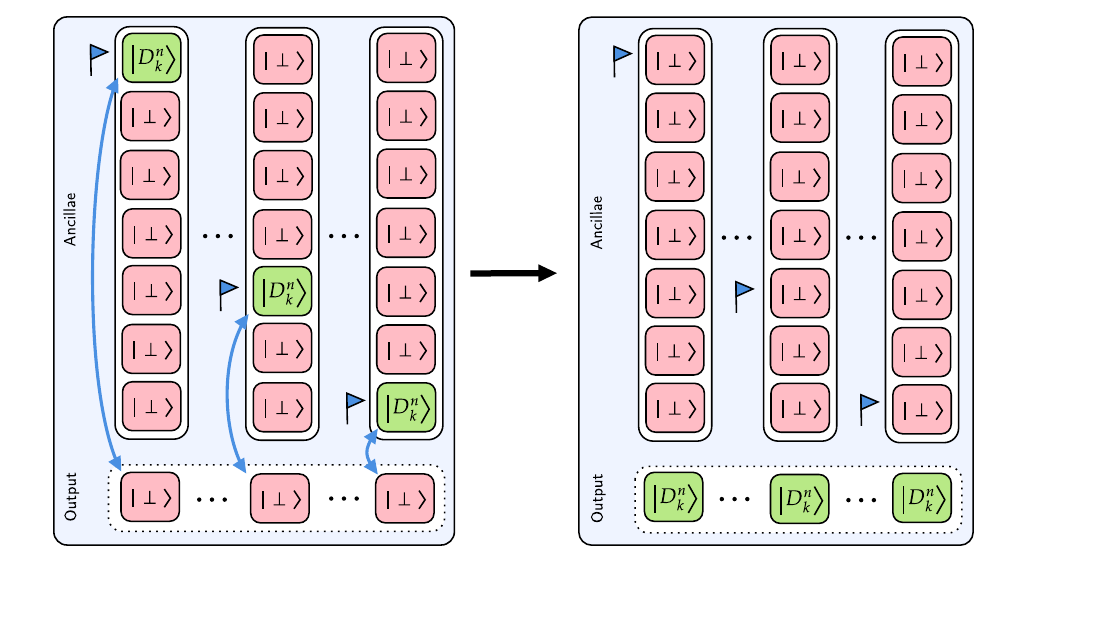}
    \caption{\textbf{Dicke Extraction via Controlled-SWAP.} An abstract visualization of the mapping from the block-$W$ state to the desired Dicke state (unentangled with junk ancillae). Specifically SWAP operations (depicted as blue arrows) are controlled by ancillae (depicted as blue flags) that flag where the Dicke state is located in each branch of the superposition. These extract the singular Dicke state from each branch of the block-$W$ state and map them into an output register. The Dicke is replaced by the $\ket{\perp}$ state that the output register was initialized to, enabling efficient uncomputation of the ancillae.}
    \label{fig:qac0f_extract}
\end{figure}

Concretely, we perform a parallel controlled-\texttt{SWAP} operation: for each $i \in [m]$, we swap the contents of block $T_i$ and target \textsf{out} conditioned on the flag ancilla bit $\textsf{flag}(i)$, i.e. $\prod_{i\in [m]} \texttt{C}_{\textsf{flag}(i)}\texttt{SWAP}(T_i, \textsf{out})$. It may seem counterintuitive that we can perform $m=O(n)$ distinct controlled-\SWAP~operations that all act on the same \textsf{out} register in a constant-depth circuit. However, in \Cref{sec:parallization}, we show this is possible to achieve in parallel due to the $W$-structure of the controls. Specifically, each controlled-\SWAP~operation can be decomposed into three $\ccnot$ gates, with $\ccnot$ gates across difference controlled-\SWAP operations commuting inside the $W$-subspace. This thus guarantees that at most one control is active at a time. 

Crucially, since the system was mapped to a block-$W$ state in Step (2), exactly one $\textsf{flag}(i^*) = 1$. Consequently, the valid Dicke state $\ket{D^n_k}$ from block $T_{i^*}$ is swapped into \textsf{out} and replaced with $\ket{\perp}$. Following this operation, each branch of the superposition contains the \emph{same} state $\ket{D^n_k}$ in \textsf{out} and, thus, the output register becomes unentangled with the other registers. Specifically, the system is mapped to the state
\begin{align} \label{eqn:intermed_state}
    \frac{1}{\sqrt{m}}\sum_{\substack{\x \in \{0,1\}^m:\\|\x|=1}}\:\: \bigotimes_{i \in [m]} \ket{\perp}_{T_i}\ket{\x_i}_{\textsf{flag}(i)} \otimes \ket{1}_{\textsf{mark}}\ket{D^n_k}_\textsf{out}.
\end{align}

\paragraph{5) Cleanup of Ancillae.} Let \textsf{flag}$=\bigcup_{i \in [m]}\textsf{flag}(i)$ denote the $m$-qubit register consisting of all the 1-qubit $\textsf{flag}(i)$ ancilla registers. By close inspection of the system state described in \Cref{eqn:intermed_state}, it can be seen that the state can be re-expressed simply as $\ket{D^n_k}_\textsf{out}\ket{\perp}^{\otimes m}_T \ket{D^m_1}_\textsf{flag} \ket{1}_\textsf{flag}$. Note, however, that all the ancillae of this state can be  uncomputed by a \QACZf~circuit. In particular, \textsf{mark} is uncomputed by a simple bit-flip, each $T_i$ is uncomputed by running the circuit to prepare $\ket{\perp}$ (as discussed in Step 3) in reverse, and \textsf{flag} is uncomputed by running the \QACZ~circuit that prepares a weight-1 $m$-qubit Dicke state in reverse.
\fi

\section{Comparison to Prior Work} \label{sec:comp_prior_work}
To contextualize our results, we compare them with recent works on low-depth Dicke state preparation. In Table \ref{tab:prior_works}, we provide a comprehensive overview of the results, emphasizing trade-offs between connectivity, circuit-depth, measurement access, and ancilla overhead.

\paragraph{The Log-Depth Unitary Barrier.} Early approaches to Dicke state preparation were limited by device geometry. As seen in the first section of Table \ref{tab:prior_works}, restrictions to 1D chains \cite{bartschi2022detdicke} or 2D grids \cite{bartschi2022shortdicke, yuan2025depth} resulted in constructions scaling quadratically or linearly with system size. Relaxing to all-to-all connectivity resulted in significantly improved logarithmic-depth scaling \cite{cruz2019w, bartschi2022shortdicke, liu2024low, yuan2025depth}. 

However, within the standard unitary circuit model, composed of bounded-width gates, these approaches face a fundamental depth lower-bound. Concretely, it is well-established in this setting that global entanglement generation, such as that of $W$ states, requires depth $\Omega(\log n)$ \cite{moore2001parallel}. Recent work of \cite{yuan2025depth} further provided a matching lower-bound of $\Omega(\log n)$ for Dicke state preparation in any ancilla-free circuit built from 2-qubit gates. Additionally, note that, in the context of Matrix Product States (MPS), Dicke states possess a bond-dimension of $k+1$ \cite{raveh2024dicke}. \cite{malz2024prep} demonstrated that while these MPS can be efficiently represented, their unitary preparation requires depth scaling logarithmically with system size to build up the necessary correlations from a product state. Similarly, \cite{gui2024space} established general space-time tradeoffs showing that for ``sparse'' states (e.g. Dicke states), surpassing logarithmic-depth scaling generally requires exponential ancillae or moving to measurement-based models. Overall, these works show that in the standard circuit model (with bounded-width gates), sub-logarithmic depth preparation of Dicke states is not possible.

\begin{table}[t!]\label{tab:prior}
    \centering
    \scriptsize
    \renewcommand{\arraystretch}{1.4}
    \setlength{\tabcolsep}{4pt}
    \begin{tabular}{|c|c|c|c|c|c|c|c|}
        \hline
        \rowcolor{gray!10} & \multicolumn{5}{c|}{Circuit} & \multicolumn{2}{c|}{Dicke State Output} \\
        \hhline{|~|-----|--|}
        \multirow{-2}{*}{\cellcolor{gray!10}Paper}&\cellcolor{gray!10}Type & \cellcolor{gray!10}Connectivity &\cellcolor{gray!10}Interactions &\cellcolor{gray!10}Depth &\cellcolor{gray!10}Ancillae &\cellcolor{gray!10}Type &\cellcolor{gray!10}Weight $(k)$ \\
         \hline
         \hline
         \cite{bartschi2022detdicke} & Unitary & 1D Chain & $\calO(1)$-Width & $\calO(n)$ & 0 & Exact & $[1,n/2]$ \\
         \hline
         \cite{bartschi2022shortdicke} & Unitary & $(n_1 \times n_2)$-Grid& $\calO(1)$-Width & $\calO(k \sqrt{\frac{n}{k}})$ & 0 & Exact & $[n_2/n_1,n/2]$ \\
         \hline
         \cite{yuan2025depth} & Unitary & $(n_1 \times n_2)$-Grid& $\calO(1)$-Width & $\calO(k \log\frac{n}{k}+n_2)$ & 0 & Exact & $[n_2/n_1,n/2]$ \\
         \hline
         \cite{yuan2025depth} & Unitary & $(n_1 \times n_2)$-Grid& $\calO(1)$-Width & $\calO(n_2)$ & 0 & Exact & $[1,n_2/n_1]$ \\
         \hline
         \hline
         \cite{cruz2019w} & Unitary & All-to-All& $\calO(1)$-Width & $\calO(\log n)$ & 0 & Exact & $1$ \\
         \hline
         \cite{bartschi2022shortdicke} & Unitary & All-to-All& $\calO(1)$-Width & $\calO(k \log \frac{n}{k})$ & 0 & Exact & $[1,n/2]$ \\
         \hline
         \cite{liu2024low} & Unitary & All-to-All& $\calO(1)$-Width & $\calO(\log^3 n \cdot  \log\log n)$ & $\calO(n \log n)$ & Exact & $[1,n/2]$ \\
         \hline
         \cite{liu2024low} & Prob. & All-to-All& $\calO(1)$-Width & $\calO(\log n \cdot  \log\log n)$ & $\calO(n \log n)$ & Exact & $[1,n/2]$ \\
         \hline
         \cite{yuan2025depth} & Unitary & All-to-All& $\calO(1)$-Width & $\calO(\log k \log \frac{n}{k}+k)$ & 0 & Exact & $[1,n/2]$ \\
         \hline
         \hline
         \cite{Buhrman2024statepreparation} & \textsf{LAQCC} & Local Grid & $\calO(1)$-Width & $\calO(1)$* & $\calO(n \log n)$ & Exact & $1$ \\
         \hline
        \cite{Buhrman2024statepreparation} & \textsf{LAQCC} & Local Grid & $\calO(1)$-Width & $\calO(1)$* & $\calO(n^2 \log n)$ & Exact & $\calO(\sqrt{n})$ \\
        \hline
        \cite{Buhrman2024statepreparation} & \textsf{LAQCC} & 1D Chain & $\calO(1)$-Width & $\calO(\log n)$* & $\Omega(n^2)$ & Exact & $[1,n/2]$ \\
        \hline
        \cite{yu2025efficient} & \textsf{LAQCC} & All-to-All& $\calO(1)$-Width & $\poly\log(n)$* & $\calO(\log n)$ & Exact & $[1,n/2]$ \\
        \hline
        \hline
         \cite{farrell2025digital} & \textsf{LOCC} & Grid & $\calO(1)$-Width & $\calO(1)$* & 0 & Approximate & $1$ \\
         \hline
         \cite{piroli2024approximate} & \textsf{LOCC} & Grid & $\calO(1)$-Width & $\calO(\log k)$* & $\calO(1)$ & Approximate & $[1,n/2]$ \\
         \hline
         \cite{piroli2024approximate} & \textsf{LOCC} & Grid& $\calO(1)$-Width & $\calO(1)$* & $\calO(\log k)$ & Approximate & $[1,n/2]$ \\
         \hline
         \hline
         \cite{grier2026qac0}\tablefootnote{Note that \cite{grier2026qac0} was concurrent to this work.} & Unitary & All-to-All& Global \CZ &  $\calO(1)$ & $\poly(n)$  & Exact & 1\\
         \hline
         \Cref{thm:exact_dicke} & \cellcolor{green!15}Unitary & All-to-All& Global \CZ & \cellcolor{green!15}$\calO(1)$ & $\calO(n^{k+1})$ & \cellcolor{green!15}Exact & \cellcolor{green!15}$\calO(1)$ \\
         \hline
         \Cref{thm:wn_qac} & \cellcolor{green!15}Unitary & All-to-All& Global \CZ & \cellcolor{green!15}$\calO(1)$ & \cellcolor{green!15}$\calO(1)$  & Approximate & 1\\
         \hline
         \Cref{thm:exact_qac0f} & \cellcolor{green!15}Unitary & All-to-All& Global \FANOUT & \cellcolor{green!15}$\calO(1)$ &  $O(n^2 \sqrt{\log n})$ & \cellcolor{green!15}Exact & \cellcolor{green!15}$[1,n/2]$ \\ 
         \hline 
    \end{tabular}
    \caption{A comprehensive overview of recent results on efficient Dicke state preparation. We group relevant results and sort them by general type: Unitary, \textsf{LAQCC} (Local Adaptive Quantum Circuits with Classical Communication), \textsf{LOCC} (Local Operations and Classical Communication), and Prob. (Probabilistic). Note that circuit models that are non-unitary, which involve non-trivial classical post-processing have a $(*)$ following the depth, to indicate additional classical circuit depth. Furthermore, all weights are listed in the range $[1,n/2]$, since Dicke states in the range $[n/2,n]$ can simply be prepared from those states and application of a layer $X$ gates.  The main advantages of our results relative to the prior literature are highlighted in green.}
    \label{tab:prior_works}
\end{table}

\paragraph{Limitations of Measurements.} Recent research has sought to surpass this logarithmic-depth barrier by moving beyond unitary evolution to models that incorporate intermediate measurements and classical feed-forward. These approaches typically fall under the framework of Local Operations and Classical Communication (\textsf{LOCC}) or its circuit-complexity formalization, Local Adaptive Quantum Circuits with Classical Communication (\textsf{LAQCC}). Unlike standard circuits which execute a fixed sequence of gates, these models allow for mid-circuit measurements where outcomes are processed classically to determine and adaptively apply future quantum gates.

While works such as \cite{Buhrman2024statepreparation} and \cite{piroli2024approximate} leverage this adaptivity to achieve constant quantum depth, they face fundamental theoretical constraints rooted in the classification of entanglement. It is well known that GHZ and Dicke states belong to distinct \textsf{SLOCC} (Stochastic \textsf{LOCC}) equivalence classes \cite{dur2000three}. While this separates their fundamental entanglement structure, more crucial to their compilation is the preparation complexity. To this end, \cite{piroli2021assist} established that while \textsf{LOCC} can deterministically prepare GHZ states in constant depth (placing them in a "trivial" phase), it cannot also do so for particle-conserving Dicke states due to their $U(1)$-symmetry. Their work proves that exact, deterministic preparation of Dicke states requires depth scaling with system size if interactions are restricted to local grids. This theoretical bound explains why the \textsf{LOCC} results in Table \ref{tab:prior_works} are necessarily probabilistic and approximate \cite{piroli2024approximate}, or require significant ancilla overheads \cite{Buhrman2024statepreparation}.

\paragraph{Introducing Global Interactions.} Our results, highlighted in the final section of \Cref{tab:prior_works}, demonstrate that the logarithmic-depth barrier can be overcome without the trade-offs inherent to \textsf{LOCC} by leveraging global interactions. Specifically, by leveraging global \CZ and \FANOUT gates, we achieve constant depth for exact Dicke state preparation. Furthermore, unlike the measurement-based approaches (constrained by \cite{piroli2021assist}), our protocols for constant-weight Dicke states in \QACZ (\Cref{thm:exact_dicke}) and arbitrary-weight Dicke states in \QACZf (\Cref{thm:exact_qac0f}) are exact and deterministic.

Finally, we acknowledge concurrent and independent work by \cite{grier2026qac0}, who similarly established constant-depth preparation of $W$ states via global \CZ gates (i.e. in \QACZ). Our work complements and extends these findings in two key directions. First, while \cite{grier2026qac0} focuses on just the $k=1$ case in \QACZ, we offer \QACZ circuits for exact preparation of any constant-weight, i.e. $k=\calO(1)$, Dicke state. We also offer \QACZf circuits for exact preparation of arbitrary-weight, i.e. $k \in [1, n/2]$, Dicke states. Second, while their $W$ state protocol requires $\poly(n)$ ancillae, we propose an additional approximate protocol (\Cref{thm:wn_qac}) requiring only $\mathcal{O}(1)$ ancillae.

\section{Discussion} \label{sec:discussion}
Despite the seemingly theoretical nature of complexity classes like \QACZ and \QACZf, the results of this work have immediate implications for physical hardware. Specifically, our protocols offer a framework for ranking the constant-depth computational power of near-term quantum architectures, based on their native connectivity. 

At the base of this hierarchy lie fixed-coupling architectures (e.g. standard superconducting circuits grids) which are fundamentally restricted by light-cone constraints, requiring $\Omega(\log n)$-depth to generate global entanglement. We demonstrate that platforms equipped with global interactions can break this barrier, but our results further suggest a nuanced distinction between the types of global control available.

Specifically, our work points to a potential power separation between neutral atom and trapped ion systems. As discussed in \Cref{sec:intro}, global \CZ gates are natural to the Rydberg blockade mechanism of neutral atom arrays, whereas global \FANOUT interactions are native to trapped ion platforms via Mølmer-Sørensen interactions. The fact that our exact protocol for arbitrary-weight Dicke states seems to rely crucially on global \FANOUT (\Cref{thm:exact_qac0f}), whereas our \CZ-based construction (\Cref{thm:exact_dicke}) appears restricted to constant-weight Dicke states, suggests that trapped ion architectures may possess a distinct state-synthesis advantage over neutral atoms for preparing highly entangled states with extensive excitations. This potential separation motivates several key directions for future inquiry:
\begin{itemize}
    \item \textbf{Rigorous Separation of Architectures:} Can it be proven that exact preparation of arbitrary-weight Dicke states is impossible in constant-depth using only global \CZ gates? Specifically, establishing a lower-bound for preparation of any $\omega(1)$-weight Dicke state in \QACZ would prove a strict computational separation between \QACZ (native to neutral atom architectures) and \QACZf (native to trapped ion architectures). This would resolve the long-standing open question of \cite{moore1999qac0} and have implications for the relative constant-depth power of different quantum hardware architectures.
    \item \textbf{Resource Optimization:} Can the resource requirements (i.e. depth and ancilla-overhead) for our exact Dicke state preparation protocols be compressed?
    \item \textbf{Alternative Global Primitives:} Do other hardware architectures, characterized by distinct forms of global connectivity, map to other interesting constant-depth complexity classes? It remains to be seen if alternative global interaction mechanisms offer superior compilers for Dicke states or other families of highly-entangled states.
\end{itemize}

\section{Acknowledgements}

The authors thank both Avishay Tal and John Wright for helpful discussions throughout this project. In particular, Tal provided key insight for the ancilla-efficient implementation of the \EXACT~Boolean function. The authors  thank Jeffery Yu for clarifications regarding prior work. F.V. also thanks  Kewen Wu for an enlightening early conversation regarding exact preparation of GHZ states in \QACZ. M.J. thanks Lucas Gretta for useful discussions about limiting fanout in classical circuits and Meghal Gupta for useful discussions about Dicke states. Finally, F.V. acknowledges ChatGPT for several helpful interactions and insights throughout the course of this project, as well as Gemini for assistance in preparation of the manuscript. Both authors are supported by the U.S. Department of Energy, Office of Science, under Award No. DE-SC0024124. F.V. is additionally supported by the Paul and Daisy Soros Fellowship for New Americans.

\bibliographystyle{alpha}
\bibliography{main}
\appendix
 \newpage\section{Constant-Weight \THRESH~and \EXACT~in \texorpdfstring{\QACZ}{Lg}} \label{sec:exact_one}
In this section we offer explicit \QACZ circuits for implementation of the \EXACT$_k$ Boolean function (as defined in \Cref{eqn:exact_defn}).
In particular, we show that an exact implementation of $\EXACT_k$ (for constant $k$) is achievable in \QACZ with $\calO(n^{k+1})$ ancillae (\Cref{sec:const_exact_proof}). We also show that a constant-error approximate implementation of \EXACT$_1$ is achievable in \QACZ with just $O(1)$ ancillae (\Cref{sec:approx_exact_proof}).

\subsection{Exact Constant-Weight \EXACT~in \texorpdfstring{\QACZ}{Lg}} \label{sec:const_exact_proof}
To achieve the exact implementation of \EXACT$_k$, we will first offer an explicit circuit for the $\THRESH_k$ Boolean function (\Cref{thm:threshk_qac0}) and then use \Cref{fact:exact_thresh} to strightforwardly compute \EXACT$_k$ (\Cref{thm:exactk_qac0}). For ease of notation, we will denote often denote by $\THRESH_k$ the shorthand $\THR_k$.

\begin{lemma}[Constant-Weight $\THRESH_k$ in \QACZ] \label{thm:threshk_qac0}
For any $k = O(1)$, $\textnormal{\THRESH}_k$ can be implemented in \QACZ using $O(n^{k+1})$ ancillae. 
\end{lemma}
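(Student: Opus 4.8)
\textbf{Proof plan for \Cref{thm:threshk_qac0}.}

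The plan is to realize the recursive formula for $\THR_k$ stated in the body of the paper as an explicit constant-depth \QACZ circuit, and then bound its ancilla cost by a dynamic-programming accounting argument. Recall the decomposition
\begin{align}
    \THR_{k}(\x) = \andl_{i \in [\log n]} \lr{ \
     \THR_0(\x_{\S_{i,0}}) \lor \THR_0(\x_{\S_{i,1}}) \
    \orl_{k' \in [k-1]} \lr{\THR_{k'}(\x_{\S_{i,0}}) \land \THR_{k-k'}(\x_{\S_{i,1}}) }},
\end{align}
where $\S_{i,b}=\{j\in[n]:\bn(j)_i=b\}$ partitions $[n]$ into two halves according to the $i$-th bit of the index. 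The first thing I would establish is \emph{correctness} of this identity: $|\x|\le k$ iff, for every bit position $i$, the weight splits as $|\x_{\S_{i,0}}|=k'$ and $|\x_{\S_{i,1}}|\le k-k'$ for some $0\le k'\le k$ --- but the ``$\le k$'' on one side has to be pinned to an exact value $k'$ on the $\S_{i,0}$ side so that summing the two exact sub-weights across a spanning set of coordinate-partitions reconstructs $|\x|$ exactly. The subtle point is why ranging $i$ over all $\log n$ bit-positions and taking the \AND is \emph{sound} (no partition can make the weight look smaller than it is) and \emph{complete} (if $|\x|>k$ then some partition witnesses it): soundness is immediate since each clause is implied by $|\x|\le k$; completeness follows because the map $j\mapsto \bn(j)$ is injective, so the $\log n$ partitions jointly separate all coordinates, hence if every clause held with its exact split $k'_i$ one could bound $|\x|$ by a telescoping/induction argument forcing $|\x|\le k$.

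With correctness in hand, the circuit is built bottom-up over the recursion on $k$ (base case $\THR_0$, which is just $\neg\bigvee_i x_i$, computable in depth $1$). Each level of the recursion contributes only $O(1)$ additional circuit depth: within a fixed bit-position $i$ the inner \OR over the $k=O(1)$ terms and the pairwise \AND's are constant-fan-in-in-$k$ Boolean combinations of already-computed sub-results, hence depth $O(1)$ in \QACZ (using \AND, \OR in depth $1$, \NOT in depth $0$); and the outer \AND over the $\log n$ values of $i$ is a single unbounded \AND, which is depth $1$ --- \emph{provided} we can feed it $\log n$ independent copies of each sub-result rather than reusing one wire. This is exactly where the $\calO(\log n)$-\FANOUT capability of \QACZ (\Cref{fact:grier_exact}, applied on $O(\log n)$ qubits so the ancilla overhead stays polynomial) is invoked: fan out each input bit, and more generally each intermediate $\THR_{k'}(\x_{\S})$ value, to the $\log n$ parallel branches indexed by $i$. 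Since $k$ is constant the recursion has constant depth $O(k)=O(1)$, so the total circuit depth is $O(1)$; one should also note the circuit is reversible/uncomputes its scratch in the standard compute--copy--uncompute fashion, which at most doubles the depth.

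The main obstacle --- and the step I expect to require the most care --- is the \emph{ancilla accounting} giving the $O(n^{k+1})$ bound. Define $a(k)$ to be the number of ancillae used to compute $\THR_k$ on an $n$-bit input (and more generally one tracks $a(k,m)$ for inputs of size $m$, since the recursion calls $\THR_{k'}$ on the sub-blocks $\x_{\S_{i,b}}$ of size $\approx m/2$... except the partitions here are by bit-position, not by halving, so each $\S_{i,b}$ still has size $\Theta(n)$, which is the crux of why the bound is $n^{k+1}$ and not smaller). For a fixed $i$, computing the clause costs: $\sum_{k'=0}^{k} [\,a(k',n)+a(k-k',n)\,]$ plus $O(k)$ bits for the Boolean combination; but these must be instantiated independently for each of the $\log n$ values of $i$ \emph{and} we must hold the $\log n$ fanned-out copies of every sub-result live simultaneously to feed the final \AND, giving a recurrence of the shape $a(k,n) \le (\log n)\cdot\bigl(2\sum_{k'<k} a(k',n) + O(\log n)\bigr) + O(\log n)$. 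Unrolling this over the constant number of levels $k,k-1,\dots,0$ with base $a(0,n)=O(\log n)$ yields $a(k,n)=O\!\bigl((\log n)^{k+1}\bigr)$ --- which is actually far better than $n^{k+1}$. The gap suggests the intended construction is somewhat more naive (e.g. materializing a result bit for \emph{each} candidate split on \emph{each} of the $n$ coordinates, or not exploiting the logarithmic fan-out fully in the sub-calls, producing a genuine $n^{k+1}$), so the real work is to pin down precisely which sub-results are recomputed versus stored, confirm the claimed $O(n^{k+1})$ is an honest (if loose) upper bound, and double-check that the only place unbounded fan-out is needed is the $\log n$-way replication --- which stays within the $O(\log n)$-\FANOUT budget of \Cref{fact:grier_exact} and hence keeps the whole circuit honestly inside \QACZ with polynomial ancilla overhead.
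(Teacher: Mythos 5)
Your plan follows the paper's proof essentially step for step: the same recursive identity over the bit-position partitions $\S_{i,b}$, the same bottom-up constant-depth realization with base case $\THR_0=\neg\OR$, and the same reliance on $\calO(\log n)$-\FANOUT via \Cref{fact:grier_exact}. One small correction on correctness: both conjuncts of each inner term are thresholds, $\THR_{k'}(\x_{\S_{i,0}})\land\THR_{k-k'}(\x_{\S_{i,1}})$; the exact split $k'=|\x_{\S_{i,0}}|$ appears only as a witness in the soundness direction. For completeness, the reason all $\log n$ partitions are needed is the $\THR_0(\x_{\S_{i,0}})\lor\THR_0(\x_{\S_{i,1}})$ disjunct: if $|\x|>k\ge 1$ then $\x$ has two distinct $1$-coordinates, which differ at some bit position $i$; for that $i$ both $\THR_0$ terms fail, and every $\THR_{k'}\land\THR_{k-k'}$ term fails because $|\x_{\S_{i,0}}|+|\x_{\S_{i,1}}|=|\x|>k$, so clause $i$ kills the conjunction. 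Your ``telescoping'' gesture points at the right injectivity fact but does not isolate this case split.

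The genuine gap is in your ancilla accounting: your recurrence charges $O(m)$ ancillae to fan a wire out to $m$ copies, but in \QACZ fan-out is not free. \Cref{fact:grier_exact} implements exact \FANOUT on $m$ targets at a cost of $\calO(2^m)$ ancillae, so even a single $\log n$-way fan-out costs $\Theta(n)$ ancillae. This omitted term is the dominant one; it is exactly why the true count is polynomial rather than the $\calO((\log n)^{k+1})$ your recurrence produces, and the discrepancy you flagged is not the paper being ``naive'' but your bookkeeping missing the cost of the fan-out gadgets themselves. The paper's accounting is: each input qubit must be replicated $f(\ell,k)\le \ell^k=(\log n)^k$ times (recurrence $f(\ell,k)=\ell\cdot f(\ell-1,k-1)$, since each coordinate lies in $\ell$ sets), and producing these copies by iterated $\log n$-way \FANOUT gadgets, each costing $\calO(n)$ ancillae, yields a polynomial total consistent with the stated (loose) bound $\calO(n^{k+1})$. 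With that correction your plan coincides with the paper's proof.
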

\begin{proof}[Proof of \Cref{thm:threshk_qac0}]
We will describe a recursive procedure for implementing $\THR_k$ in terms for $\THR_{k'}$ for $k' < k$. The base case is when $k = 0$ and can be implemented with a single $\neg \OR(\x)$ gate. 

Let $\ell = \lceil \log n \rceil$. For any $j \in [n]$, let $\bn(j)_i$ be the $i$th bit of $\bn(j)$, the binary encoding of $j$. Define $\S_{i,b} \subseteq [n]$ to be the set of numbers $j \in [n]$, such that the $i^\text{th}$ bit of their binary representation is $b$. Formally, for $i \in [\ell]$ and $b \in \bin$, the set is defined as 
\begin{align}
    \S_{i,b} = \{j \in [n]:~\bn(j)_i = b\}.
\end{align}
Then, for $k > 0$, $\THR_k$ is given by the following recurrence.  
\begin{align}\label{eq:thresh}
    \THR_{k} = \andl_{i \in \ell} \lr{ \
     \THR_0(\x_{\S_{i,0}}) \lor \THR_0(\x_{\S_{i,1}}) \ 
    \orl_{k' \in [k-1]} \lr{\THR_{k'}(\x_{\S_{i,0}}) \land \THR_{k-k'}(\x_{\S_{i,1}}) }}
\end{align}
First we will argue the correctness of \Cref{eq:thresh} and then analyze its size and depth.
\paragraph{$\THR(\x) = 1$:} For any partitioning of the qubits into sets $S$ and $\ov{S}$, it must be that $|\x_{\S}| + |\x_{\ov{S}}| \leq k$. Thus, either $|\x_{\S}|=0$ or $|\x_{\ov{S}}| = 0$, or for $k' = |\x_{\S}|$, 
$\THR_{k'}(\x_{S}) \land \THR_{k'}(\x_{\ov{S}})$ is \emph{true}. Therefore, each \OR clause in \Cref{eq:thresh} will be satisfied making the entire formula \emph{true}.

\paragraph{$\THR(\x) = 0$:} Let $j_1,j_2 \in [n]$ be two \emph{different} qubits such that $x_{j_1} = 1$ and $x_{j_2} = 1$. Then, since $j_1 \neq j_2$, there must be some $i \in [\ell]$ such that $\bn(j_1)_i \neq \bn(j_2)_i$. For this value of $i$, both $\THR_0(\x_{\S_{i,0}}) = 0$ and $\THR_0(\x_{\S_{i,1}}) = 0$. Furthermore, for any $S$, $|\x_{S}| + |\x_{\ov{S}}| > k$. Thus, for any $k' \in [k]$, the clause $\THR_{k'}(\x_{\S_{i,0}}) \land \THR_{k-k'}(\x_{\S_{i,1}})$ will be \emph{false}. This makes the \OR clause corresponding to $i$ \emph{false} making the entire formula \emph{false}. 

\paragraph{Circuit analysis.}
Without loss of generality, we will assume $n$ is a power of $2$ and therefore $n = 2^\ell$. This circuit can be computed by a bottom-up dynamic program in $d(k)$ additional layers after creating $f(\ell,k)$ copies of each input qubit $x_j$ at the beginning. 
In each step, we recursively compute all the values $\THR_t(\x_{\S_{i,b}})$ for $t \in \rng{k}, i\in [\ell], b \in \bin$  on a new register $q_{t,i,b}$. For each value of $t$, this can be done in parallel across all $\S_{i,b}$. Each $\S_{i,b}$ can be done in depth $d(t)$ given $f(\ell-1,k-1)$ (from $\vlr{\S_{i,b}} = n/2$) of the copies corresponding to the qubits in $\S_{i,b}$. Finally, these values can be combined in $3$ layers on $q_{t,i,b}$ to obtain $\THR_k$ using \Cref{eq:thresh}. 
The total depth is given by the recurrence $d(k) = 3 + \sum_{t = 0}^{k-1} d(t)$, which satisfies $d(k+1) \leq 2d(k)$ giving $d(k) = O(2^k)$. 
The total number of copies $f(\ell,k)$ to be created at the start is given by the recurrence, 
\begin{align}
    f(\ell,k) &= \ell \cdot f(\ell-1,k-1) \leq \ell^k
\end{align}
because each $j \in [n]$ appears in $\ell$ sets. 
This can be done using \Cref{fact:grier_exact} with $O(n^{k+1}) = \poly(n)$ total ancillae.
\end{proof}

\begin{corollary}[Exact \EXACT$_k$ in \QACZ] \label{thm:exactk_qac0}
    For any $k = O(1)$, $\textnormal{\EXACT}_k$ can be implemented in \QACZ using $O(n^{k+1})$ ancillae. 
\end{corollary}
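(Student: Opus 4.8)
The plan is to derive the corollary directly from \Cref{fact:exact_thresh} together with the explicit $\THRESH_k$ construction of \Cref{thm:threshk_qac0}. Since $\EXACT_k(\x) = \THRESH_k(\x) \land \neg\, \THRESH_{k-1}(\x)$, it suffices to compute $\THRESH_k(\x)$ and $\THRESH_{k-1}(\x)$ into two separate output qubits and then combine them with a constant-depth Boolean combiner.

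First I would instantiate \Cref{thm:threshk_qac0} twice, in parallel and on disjoint ancilla registers: one \QACZ subcircuit computing $\THRESH_k(\x)$ into an output qubit using $\calO(n^{k+1})$ ancillae, and a second subcircuit computing $\THRESH_{k-1}(\x)$ into another output qubit using $\calO(n^{k})$ ancillae. Both subcircuits read from the same input register (after producing the $\calO(\log n)$-\FANOUT copies of each input bit exactly as in the proof of \Cref{thm:threshk_qac0}) but write only to fresh, disjoint ancillae, so they can be executed side by side. The combined depth is the maximum of the two, which is $\calO(2^k) = \calO(1)$ layers beyond the input-copying layer, and the combined ancilla count is $\calO(n^{k+1}) + \calO(n^{k}) = \calO(n^{k+1})$ since $k = \calO(1)$. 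Then I would apply a \NOT to the $\THRESH_{k-1}$ output (depth $0$ in \QACZ) followed by an \AND of that bit with the $\THRESH_k$ output into a final output qubit; the \AND is a depth-$1$ \QACZ circuit via Hadamard-conjugation of a global \CZ. This produces $\EXACT_k(\x)$ while adding only $\calO(1)$ depth and $\calO(1)$ ancillae, so the overall circuit is a constant-depth, $\poly(n)$-size \QACZ circuit using $\calO(n^{k+1})$ ancillae, as claimed.

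There is essentially no serious obstacle here: the real work lives in \Cref{thm:threshk_qac0}, and what remains is bookkeeping — checking that the two threshold subcircuits genuinely compose in parallel on disjoint ancillae, that the leftover garbage produced by the $\THRESH_{k-1}$ computation does not interfere with the combiner, and that the final \NOT/\AND stage is depth $\calO(1)$. The only place one must be slightly careful is if a garbage-free computation is required — for instance to build the \EXACT$_k$ phase oracle used in the Dicke-to-\EXACT~reduction of \Cref{sec:reduction_overview}. In that case one runs the circuit above, applies a $Z$ to (or copies out) the output, and runs the circuit in reverse to uncompute all intermediate registers; this doubles the depth (still $\calO(1)$) and reuses the same ancillae, so the $\calO(n^{k+1})$ bound is preserved.
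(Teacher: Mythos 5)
Your proposal is correct and follows exactly the paper's route: the paper's own proof is a one-line appeal to \Cref{fact:exact_thresh} applied to the $\THRESH_k$ construction of \Cref{thm:threshk_qac0}, and your write-up simply makes explicit the parallel-composition and ancilla bookkeeping that the paper leaves implicit. Your remark about uncomputation for the phase-oracle version also matches how the paper handles this in the proof of \Cref{thm:exact_dicke}.
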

\begin{proof}[Proof of \Cref{thm:exactk_qac0}]
    By \Cref{fact:exact_thresh}, \EXACT$_k$ can be implemented in \QACZ with the same asynmptotic ancilla overhead as the $\THRESH_k$ implementation.
\end{proof}

\subsection{Reduced-Ancillae Approximation of \EXACT\texorpdfstring{$_1$}{Lg} in \texorpdfstring{\QACZ}{Lg}} \label{sec:approx_exact_proof}
The previous exact implementation of constant-weight \EXACT$_k$ Boolean functions in \QACZ required polynomial ancillae. We will now show how the \EXACT$_1$ unitary can be approximated to arbitrary constant precision in \QACZ~using only constant ancillae. This will ultimately be used (in \Cref{sec:qac0_dicke_proofs}) to achieve a constant-error approximation of the $W$ state in \QACZ~with constant ancillae.

\begin{lemma}[Approximate \EXACT$_1$ in \QACZ] \label{thm:approx_exact_imp}
    For any constant error $\varepsilon \in (0,1)$, there exists a \QACZ circuit $C_{\{\calS_i\}_{i\in k}}$ parameterized by a random choice of subsets $\{\calS_i\subset [n]\}_{i\in k}$, using $a=O(1)$ ancillae, such that for all $\x \in \{0,1\}^n$,
    \begin{align}
        \Pr_{\{\calS_i\}_{i\in k}} \left[C_{\{\calS_i\}_{i\in k}}\ket{\x}\ket{0^a} \neq \ket{\x}\ket{\textnormal{\EXACT}_1(\x)}\ket{0^{a-1}}\right] \leq \varepsilon.
    \end{align}
\end{lemma}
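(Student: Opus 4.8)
The strategy is to make rigorous the randomized-gadget construction already sketched in \Cref{sec:qac0_constant_weight}. The circuit $C_{\{\calS_i\}_{i \in [t]}}$ will work as follows: we draw $t = \lceil 64 \ln(1/\varepsilon) \rceil = \calO(1)$ independent subsets $\calS_1, \dots, \calS_t$, where each index $j \in [n]$ is placed into $\calS_i$ with probability $1/2$ independently. For each $i \in [t]$ we compute the bit $G_{\calS_i}(\x) = \bigl(\bigwedge_{j \in \calS_i} \neg x_j\bigr) \wedge \bigl(\bigvee_{j \in \bar\calS_i} x_j\bigr)$ into a fresh ancilla $b_i$; each such bit is a depth-$\calO(1)$ \QACZ computation since it is an \AND of two unbounded-fan-in clauses, and the $t$ gadgets act on disjoint output ancillae so they can be run in parallel. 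Finally we compute $\THRESH_{\lceil 3t/8 \rceil}$ on the $t = \calO(1)$ bits $b_1, \dots, b_t$ and write the result $[\,\sum_i b_i \ge 3t/8\,]$ into the output ancilla; since this threshold is over a \emph{constant} number of bits, it needs only $\calO(1)$ ancillae and depth (indeed it is just a fixed Boolean function on $\calO(1)$ inputs). We then uncompute the $b_i$ by rerunning the $G_{\calS_i}$ gadgets in reverse, leaving the output ancilla holding the answer and the remaining $a - 1 = \calO(1)$ ancillae returned to $\ket{0}$. Since all the gadgets are classical reversible circuits (controlled-\NOT/\TOFFOLI-type), running them forward and back is exact, so the circuit deterministically (for fixed $\mathbf S$) implements the map $\ket{\x}\ket{0^a} \mapsto \ket{\x}\ket{g_{\mathbf S}(\x)}\ket{0^{a-1}}$ where $g_{\mathbf S}(\x) = [\,N^{(t)}_\x \ge 3t/8\,]$ and $N^{(t)}_\x = \sum_{i \in [t]} G_{\calS_i}(\x)$.

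\textbf{Correctness of the randomized reduction.} It remains to show $\Pr_{\mathbf S}[\,g_{\mathbf S}(\x) \ne \EXACT_1(\x)\,] \le \varepsilon$ for \emph{every} fixed $\x \in \bin^n$. First compute $\E_{\calS}[G_\calS(\x)] = \Pr_\calS[G_\calS(\x) = 1]$: if $|\x| = 0$ this is $0$; if $|\x| = m \ge 1$ this is $2^{-m}$, because $G_\calS(\x) = 1$ iff all $m$ ones of $\x$ land in $\bar\calS$ (probability $2^{-m}$) and not all zeros land in $\calS$ — but the latter is automatic once at least one bit is in $\bar\calS$. Hence $\E[N^{(t)}_\x] = t \cdot 2^{-|\x|}$, which equals $t/2$ when $|\x| = 1$ and is at most $t/4$ when $|\x| \ge 2$. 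The threshold $3t/8$ sits strictly between these two means, so a deviation large enough to flip the answer is a deviation of at least $t/8$ from the relevant mean. Apply a (two-sided, or rather the appropriate one-sided) Chernoff/Hoeffding bound to the sum of $t$ independent $\{0,1\}$ random variables: $\Pr[\,|N^{(t)}_\x - \E N^{(t)}_\x| \ge t/8\,] \le 2\exp(-2(t/8)^2/t) = 2\exp(-t/32)$, or using the additive Hoeffding bound $\exp(-2t(1/8)^2) = \exp(-t/32)$; the looser constant $\exp(-t/64)$ from the excerpt also works. Setting $t \ge 64 \ln(1/\varepsilon)$ makes this at most $\varepsilon$. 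Since this holds for each fixed $\x$, the conclusion of the lemma follows. (The case $|\x| = 0$ is handled for free: then $N^{(t)}_\x = 0 < 3t/8$, so $g_{\mathbf S}(\x) = 0 = \EXACT_1(\x)$ with certainty.)

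\textbf{Ancilla and gate accounting.} The ancilla register consists of: $t = \calO(1)$ bits $b_1, \dots, b_t$ for the gadget outputs; $\calO(1)$ scratch bits for the final constant-size threshold on $t$ inputs; one output bit; plus any $\calO(1)$ scratch needed inside a single $G_\calS$ gadget (an unbounded \OR can be done with one ancilla via Hadamard-conjugated \TOFFOLI, and the \AND of the two clauses needs one more). All of these are $\calO(1)$, so $a = \calO(1)$ as claimed. Note we do \emph{not} invoke \Cref{fact:grier_exact} here since no \FANOUT is needed — each $x_j$ feeds into exactly one gadget's \AND-clause or \OR-clause, so fan-out is bounded by the $\calO(1)$ number of gadgets (and can be realized by $\calO(1)$ applications of unbounded \CZ with fresh ancillae, or simply absorbed since each gadget reads each $x_j$ at most twice). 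Depth is $\calO(1)$: a constant number of \AND/\OR layers for the gadgets, a constant-size Boolean block for the threshold, then the mirror image for uncomputation.

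\textbf{Main obstacle.} The construction itself is routine; the one point requiring care is the Chernoff step in the second paragraph, specifically verifying that the single threshold $3t/8$ separates \emph{both} the $|\x|=1$ case (mean $t/2$) from below \emph{and} every $|\x|\ge 2$ case (mean $\le t/4$) from above, uniformly in $|\x|$, so that a \emph{single} union-free bound over the two failure modes suffices per input $\x$. One must also be slightly careful that the bound holds for \emph{all} $\x$ simultaneously only in the per-input sense stated in the lemma — there is no claim that one fixed $\mathbf S$ works for all $\x$; that derandomization is deferred to the fidelity analysis of the full Dicke-to-\EXACT\ circuit in \Cref{sec:approx_w_analysis}, and should not be conflated with this lemma.
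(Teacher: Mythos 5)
Your proposal is correct and follows essentially the same route as the paper's proof: the same randomized gadget $G_{\calS}$, the same acceptance probabilities ($0$ for $|\x|=0$ and $2^{-|\x|}$ for $|\x|\ge 1$), the same threshold at $3t/8$ with a concentration bound yielding $t=O(\log(1/\varepsilon))=O(1)$, and the same compute--threshold--uncompute circuit with a brute-force constant-size threshold over the $t$ output bits; your use of the additive Hoeffding bound ($e^{-t/32}$) in place of the paper's multiplicative Chernoff bounds ($e^{-t/64}$ and $e^{-t/40}$) is a cosmetic difference. The one point to tighten is your claim that the $t$ gadgets "can be run in parallel": they all read the shared input register, so they cannot occupy a single layer without copying the inputs (costing $\Omega(n)$ ancillae, as the paper notes); the resolution is exactly your stated fallback and the paper's choice --- run the $t=O(1)$ gadgets sequentially, which keeps both depth and ancilla count constant.
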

\begin{proof}[Proof of \Cref{thm:approx_exact_imp}]
    We will first describe a high-level construction of the circuit which approximates the \EXACT$_1$ unitary and justify its correctness. We will then show how all components of the circuit can be implemented in \QACZ~with constant ancillae.

    \paragraph{Circuit Proposal and Correctness.} Our goal is to implement a circuit that checks if $h_{\x}=1$. To do so, we will first construct a simplified gadget that checks if all the bits in some random subset $\calS \subset [n]$, i.e. $\x_\calS$, are of value 0 and if at least one bit in $\bar{\calS}$ has value 1. We will then re-run this gadget a constant number of times, $t$, and perform a threshold test on all the output. Via a Chernoff bound, we will show that the overall probability of error in the threshold decays exponentially in $t$.

    Concretely, our gadget is constructed as follows. For each bit in $i \in [n]$, with probability $1/2$, add the bit $i$ to subset $\calS$ or, if not, add it to the complementary subset $\bar{\calS}$. Based on this partitioning, we will split the input string $\x$ into the two sub-strings $\x_{\calS}$ and $\x_{\bar{\calS}}$. The gadget $G_\calS$ performs the following check that: $A$) all bits in $\x_\calS$ are 0 and $B$) at least one bit of $\x_{\bar{\calS}}$ is 1, as
    \begin{align}
        G_\calS(\x) = A(\x_\calS) \land B(x_{\bar{\calS}}) =\left(\bigwedge_{i \in \calS} \neg~x_i\right) \land \left(\bigvee_{j \in \bar{\calS}} x_j\right).  
    \end{align}

    We will now evaluate the probability of success for a gadget $G_\calS$ given an input string $\x \in \bin^n$, with Hamming weight $|\x|$. In the case that $|\x|=0$, the second clause $B(\x_{\calS})$ always evaluates to zero, meaning 
    \begin{align}
        \Pr_\calS\:[~G_\calS(\x)=1~|~h_{\x} = 0~] = 0.
    \end{align}
    For $|\x| \geq 1$, the probability that the first clause accepts, i.e. $A(\x_\calS)=1$, is equal to the probability that none of the 1 bit strings end up in $\calS$. Since the probability that a given bit string ends up in $\calS$ is $1/2$, 
    \begin{align}
        \Pr_\calS\:[~A(\x_\calS)=1~] = 1/2^{|\x|}.
    \end{align}
    Furthermore, in the case that clause $A(\x_{\calS})$ is satisfied and $|\x| \geq 1$, it is always true that clause $B(\x_{\bar{\calS}})$ is satisfied, which implies
    \begin{align}
        \Pr_{\calS}\:[~G_{\calS}(\x)=1~|~h_{\x} \geq 1~] = 1/2^{|\x|}.
    \end{align}
    Therefore, in the desired YES instance, in which $|\x| =1$, the probability of acceptance is $1/2$.
    Meanwhile, in the NO instance, where $|\x|=0$ or $|\x| \geq 2$, the probability of acceptance is upper-bounded by $1/4$ (it is strictly equal to $1/4$ when $|\x| =2$).

    To improve the probability of acceptance in the YES case and widen the gap with the NO case, the circuit will re-run the gadget a constant number of times, $t$, with randomly chosen $\calS_t$ for each gadget instance. Let $N^{(t)}_{|\x|}$ denote the total number of successes  when the gadget is run $t$ times on an input string with Hamming weight $|\x|$. The circuit will then use a threshold operation on these $t$ outcomes to accept if $N^{(t)}_{|\x|} > 3t/8$.

    We will now analyze the overall probability of error for the full circuit, in which $t$ instances of the gadget are run and the outcomes are thresholded. To begin, note that $N^{(t)}_{|\x|}$ is a random variable distributed as
    \begin{align}
        N^{(t)}_{|\x|} \sim \begin{cases}
            0, & \text{if } |\x| = 0 \\
            \text{Bin}(t,1/2), & \text{if } |\x| = 1 \\
            \text{Bin}(t,1/2^{|\x|}), & \text{if } |\x| \geq 2
        \end{cases}, \quad \text{s.t.} \quad \Ex [N^{(t)}_{|\x|}] = \begin{cases}
            0, & \text{if } |\x| = 0 \\
            t/2, & \text{if } |\x| = 1 \\
            t/2^{|\x|}, & \text{if } |\x| \geq 2
        \end{cases}.
    \end{align}
    In the YES instance, in which $|\x|=1$, an error occurs if $N^{(t)}_{1} \leq 3t/8$. By Chernoff's lower-tail bound, we have that
    \begin{align}
        \Pr\left(N^{(t)}_{|\x|} \leq (1-\delta)\cdot \Ex [N^{(t)}_{|\x|}]\right) \leq \exp\left(-\frac{\Ex [N^{(t)}_{|\x|}]\cdot \delta^2}{2}\right),
    \end{align}
    which implies that
    \begin{align}
        \Pr\left(N^{(t)}_{1} \leq 3t/8\right) \leq \exp\left(-t/64\right).
    \end{align}
    In the NO instance, in which $h_{\x}\neq 1$, an error occurs if $N^{(t)}_{|\x|} \geq 3t/8$. 
    For $|\x|=0$, we previously established that the probability of error is zero. For $|\x| \geq 2$, the case with the largest expected number of trials and, thus, highest probability of error is the case in which $|\x|=2$. Using Chernoff's upper-tail bound,
    \begin{align}
        \Pr\left(N^{(t)}_{|\x|} \geq (1+\delta)\cdot \Ex [N^{(t)}_{|\x|}]\right) \leq \exp\left(-\frac{\Ex [N^{(t)}_{|\x|}]\cdot \delta^2}{2+\delta}\right).
    \end{align}
    we get that
    \begin{align}
        \Pr\left(N^{(t)}_{|\x| \geq 2} \geq 3t/8\right) \leq \Pr\left(N^{(t)}_{2} \geq 3t/8\right)  \leq \exp\left(-t/40\right).
    \end{align}
    Therefore, for any string $\x$, of arbitrary Hamming weight, the probability that the circuit does not correctly compute \EXACT$_1$ is upper-bounded by $\exp\left(-t/64\right)$. This implies that if we run the gadget 
    \begin{align}
        t \geq \lceil 64\log(1/\varepsilon) \rceil
    \end{align}
    times, the circuit achieves any desired constant error $\varepsilon$.

    \paragraph{Circuit Implementation in \QACZ.} Now that we have described the high-level circuit structure and proven its correctness, we will describe how the different circuit components can be implemented in \QACZ.

    The first key component of the circuit is the gadget that checks $G_\calS(\x)$, which can be implemented as follows. 
    To check if all bits in $\x_\calS$ are 0 the gadget will apply an $X$ gate to all the qubits in $\calS$, perform a Toffoli gate controlled on $\calS$ with target acting on an ancilla initialized to 0, and then perform X gates again on $\calS$---thereby computing the value $A(\x_\calS)$ into the ancilla. To check if at least one of the bits in $\x_{\bar{\calS}}$ is 1, it suffices to check that $\x_{\bar{\calS}}$ is not the all zeros string, i.e. $B(\x_{\bar{\calS}}) = \neg A(\x_{\bar{\calS}})$. Therefore, we can simply reuse the previous gadget for computing $A$, apply it to the subset of qubits in $\bar{\calS}$ and apply an $X$ gate to the ancilla target qubit. Finally, we apply a Toffoli gate with the two ancillae as controls and as target a fresh ancilla initialized to zero, thereby computing $G_\calS(\x)$ into the ancilla. To reset the other two ancillae to zero (ensuring the computation is input preserving), we can then simply re-apply the previously discussed gates for computing $A(\x_\calS)$ and $\neg A(\x_{\bar{\calS}})$. The overall circuit requires 3 ancillae overall (although 2 are returned back to the initial 0 state and can be reused).

    The circuit will rerun the $G_\calS(\x)$ gadget a constant number of times, $t$, in which each time the subset $\calS$ is apriori randomly choosen. These $t$ runs can either be done in parallel (requiring less depth, but a linear number of ancillae) or in parallel (requiring increased constant depth, but only constant ancillae). After the gadget is run $t$ times, there will be $t$ ancillae, each containing the value $G_\calS(\x) \in \{0,1\}$ for a randomly choosen $\calS$. Therefore the $t$ ancillae are in a state $\ket{y}$, for some string $y \in \{0,1\}^t$. 
    
    The final step of the circuit is to implement a gadget which computes whether $|y| \geq 3t/8$. (For simplicity, we will assume $t$ is choosen to be a multiple of 8, such that the threshold is an integer value.) Although \THRESHOLD on $n$ qubits is generically not in \ACZ~or \QACZ, note that here we are only computing \THRESHOLD on a \emph{constant} number of bits. 
    As such, we can decompose the \THRESHOLD into a DNF that brute-force checks all subsets of size $T \geq 3k/8$, i.e.
    \begin{align}
        \sum_i y_i \geq T \iff \bigvee_{\substack{S\subseteq [k]:\\ |S|\geq 3k/8}} \bigwedge_{i \in S} y_i.
    \end{align}
    Since there are only a constant number of subsets, overall this will only require constant ancilla overhead and depth to implement. 
    
    Therefore, by running the $G_\calS(\x)$ checks sequentially, the circuit can perform the desired computation in \QACZ~using only a constant number of ancillae.
\end{proof}

\section{Preparing Constant-Weight Dicke States in \texorpdfstring{\QACZ}{Lg}} \label{sec:qac0_dicke_proofs}
Leveraging our exact implementation of constant-weight \EXACT~and our approximate implementation of \EXACT$_1$ from the previous section, we will now show how to prepare exact constant-weight Dicke states and approximate $W$ states in \QACZ. 

Specifically, we begin by establishing a Dicke-to-\EXACT~reduction, which for any $k=\calO(1)$ uses an \EXACT$_k$ phase oracle to prepare a weight-$k$ Dicke state (\Cref{sec:exact_dicke_reduct}). We then plug the exact implementations of \EXACT$_k$, from \Cref{sec:const_exact_proof}, into this reduction to achieve exact weight-$k$ Dicke states in \QACZ (\Cref{sec:exact_qac0_proof}). Finally, we plug the approximate implementation of \EXACT$_k$, from \Cref{sec:approx_exact_proof}, into this reduction and perform error analysis to achieve a constant-error approximation of the $W$ state in \QACZ (\Cref{sec:approx_w_analysis}).

\subsection{A Constant-Weight Dicke-to-\EXACT~Reduction for \texorpdfstring{\QACZ}{Lg}} \label{sec:exact_dicke_reduct}

We will now show that, for any constant weight $k=\calO(1)$, preparation of the Dicke state $\ket{D_k^n}$ can be reduced, via amplitude amplification, to a phase oracle implementation of \EXACT$_k$ in \QACZ.

\begin{lemma}[Reducing $\ket{D_k^n}$ to \EXACT$_k$] \label{thm:w_to_exact}
    Given a \QACZ circuit $C'$ that implements the \textnormal{\EXACT}$_k$ phase oracle (stores answer in the phase), there exists a \QACZ~circuit $C$ that prepares the $\ket{D_k^n}$ state, without any additional ancillae.
\end{lemma}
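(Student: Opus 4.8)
The plan is to follow the two-stage ``initialization then amplification'' strategy sketched in Section~\ref{sec:reduction_overview}, and to make the key claims there precise. Write $\ell$ for the number of Grover iterations we will ultimately use. First I would set up the initialization: prepare $\ket{\veta_\theta} = R_y(2\theta)^{\otimes n}\ket{0^n}$ with a single depth-$0$ layer of single-qubit rotations. Decompose $\ket{\veta_\theta} = \sqrt{p^n_k(\theta)}\,\ket{D^n_k} + \sqrt{1-p^n_k(\theta)}\,\ket{\veta_\theta^\perp}$, where $\ket{\veta_\theta^\perp}$ is the (normalized) projection of $\ket{\veta_\theta}$ onto the orthogonal complement of $\ket{D^n_k}$ inside the weight-$\neq$... actually inside the full space; the crucial point is that $p^n_k(\theta) = |\braket{\veta_\theta|D^n_k}|^2$ equals the probability that a $\mathrm{Bin}(n, \sin^2\theta)$ random variable equals $k$, since every weight-$k$ basis string receives amplitude $(\sin\theta)^k(\cos\theta)^{n-k}$ and there are $\binom nk$ of them, so $p^n_k(\theta) = \binom nk \sin^{2k}\theta\,\cos^{2(n-k)}\theta$. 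Choosing $\sin^2\theta = k/n$ maximizes this and, by the standard estimate for the mode of a binomial, gives $p^n_k(k/n) \ge c_k$ for a constant $c_k$ depending only on $k$ (one can take $c_k \ge e^{-k}$ for $n$ large, via $\binom nk (k/n)^k(1-k/n)^{n-k} \to e^{-k}k^k/k!$ and $k^k/k! \ge 1$).

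Next I would invoke exact (fixed-point) amplitude amplification. Standard amplitude amplification with $\ell$ iterations, starting from overlap $\sqrt p = \sin\phi$, produces overlap $\sin((2\ell+1)\phi)$ with the target; this equals $1$ exactly when $(2\ell+1)\phi = \pi/2$, i.e. $\phi = \frac{\pi}{2(2\ell+1)}$, i.e. $p = \sin^2\!\big(\tfrac{\pi}{2(2\ell+1)}\big)$. So the plan is: pick $\ell$ to be the smallest integer with $\sin^2\!\big(\tfrac{\pi}{2(2\ell+1)}\big) \le c_k$ — this $\ell$ is a constant depending only on $k$, hence $\ell = \calO(1)$ for $k = \calO(1)$ — and then use the Intermediate Value Theorem: the map $\theta \mapsto p^n_k(\theta)$ is continuous, equals $0$ at $\theta = 0$, and reaches value $\ge c_k \ge \sin^2\!\big(\tfrac{\pi}{2(2\ell+1)}\big)$ at $\theta = \arcsin\sqrt{k/n}$, so there is some $\theta^\star \in (0, \arcsin\sqrt{k/n}]$ with $p^n_k(\theta^\star) = \sin^2\!\big(\tfrac{\pi}{2(2\ell+1)}\big)$ exactly. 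Running $\ell$ iterations of amplitude amplification from $\ket{\veta_{\theta^\star}}$ then yields $\ket{D^n_k}$ exactly.

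Then I would argue each Grover iteration is implementable in constant-depth \QACZ. An iteration is $-(2\ketbra{\veta_{\theta^\star}}{} - I)\,U_k$ where $U_k$ is the \EXACT$_k$ phase oracle, which is given as a \QACZ circuit $C'$ by hypothesis. The reflection $2\ketbra{\veta_{\theta^\star}}{} - I$ about the product state $\ket{\veta_{\theta^\star}}$ is, up to conjugation by the single-qubit layer $R_y(2\theta^\star)^{\otimes n}$, the reflection $2\ketbra{0^n}{} - I$ about $\ket{0^n}$, which is $-X^{\otimes n}\,\CZ_{[n]}\,X^{\otimes n}$ up to global phase — a depth-$1$ \QACZ operation (this is exactly the ``product state reflection'' / Rosenthal normal-form primitive mentioned in Section~\ref{sec:prelims}). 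So one iteration costs $\mathrm{depth}(C') + \calO(1)$, and $\ell = \calO(1)$ iterations cost $\calO(1)$ total depth; since $C'$ uses no ancillae beyond what it internally needs and the reflections use none, the whole circuit $C$ uses no ancillae beyond those of $C'$, as claimed. Note that since $C'$ implements the phase oracle, we also need its inverse; running $C'$ and then $C'^{-1}$ is fine in \QACZ since the class is closed under inverses (reverse the circuit, invert each single-qubit gate, and \CZ is self-inverse).

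The main obstacle is the exactness of the amplitude amplification, i.e. making the Intermediate Value Theorem step airtight: one must be careful that the target value $\sin^2\!\big(\tfrac{\pi}{2(2\ell+1)}\big)$ is genuinely attained on the interval $(0, \arcsin\sqrt{k/n}]$, which requires the lower bound $p^n_k(\arcsin\sqrt{k/n}) \ge \sin^2\!\big(\tfrac{\pi}{2(2\ell+1)}\big)$ to hold for all sufficiently large $n$ simultaneously with $\ell$ fixed — this is where the clean constant lower bound $c_k$ on the binomial mode (uniform in $n$) is essential, and I would state and prove that estimate carefully rather than wave at it. A secondary subtlety is checking the amplitude-amplification invariant stays in the two-dimensional subspace spanned by $\ket{D^n_k}$ and $\ket{\veta_{\theta^\star}^\perp}$ throughout, which is routine since both the oracle reflection and the $\ket{\veta_{\theta^\star}}$-reflection preserve that plane.
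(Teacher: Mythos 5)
Your proposal is correct and follows essentially the same route as the paper's proof: prepare the product state $R_y(2\theta)^{\otimes n}\ket{0^n}$, lower-bound the binomial mode $p^n_k(k/n)$ by the constant $e^{-k}$, use the Intermediate Value Theorem to tune $\theta$ so that $p^n_k(\theta)=\sin^2\bigl(\tfrac{\pi}{4\ell+2}\bigr)$ for a constant integer $\ell$, and run $\ell$ exact amplitude-amplification rounds whose state reflection is a depth-1 product-state reflection and whose oracle reflection is $C'$. The only cosmetic difference is that the paper proves the $e^{-k}$ bound non-asymptotically via $\binom{n}{k}\ge(n/k)^k$ and $\ln(1-x)\ge -x/(1-x)$, whereas you sketch it via an asymptotic limit; your own caveat about making that estimate uniform in $n$ is exactly what the paper's version supplies.
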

\begin{proof}[Proof of \Cref{thm:w_to_exact}]
    We assume our circuit is initialized to the state $\ket{0^{n}}$. In the first layer, the circuit performs a series of single-qubit rotations, parameterized by angle $\theta$, mapping the initial state to the separable state
    \begin{align}
        \ket{\veta_\theta} = R_y(2 \theta)^{\otimes n} \ket{0^{n}} &= \left( \cos\theta \ket{0}+\sin\theta \ket{1} \right)^{\otimes n} \\
        & = \alpha_\theta \sum_{|x|=k} \ket{x} + \sqrt{1-\alpha_\theta^2} \sum_{|y| \neq k} \ket{y} \\
        &= \alpha_\theta \cdot \sqrt{{n \choose k}} \cdot  \ket{D^n_k} + \sqrt{1-\alpha_\theta^2} \sum_{|y| \neq 1} \ket{y}
    \end{align}
    where $\alpha_\theta = (\sin\theta)^k \cdot (\cos\theta)^{n-k}$.
    We will hereon denote that overlap between this state and the desired Dicke state $\ket{D^n_k}$ as
    \begin{align}
        p^n_k(\theta) = |\braket{\veta_\theta | {D^n_k}}|^2 = |\alpha_\theta|^2 \cdot {n \choose k} = {n \choose k} \cdot (\sin^2\theta)^k \cdot (\cos^2\theta)^{n-k}.
    \end{align}

    We will now use the Intermediate Value Theorem to argue that, for any number of qubits $n$ and constant $c_k \in (0,e^{-k})$, there exists a choice of angle $\theta \in [0,\pi/2]$ such that $p^n_k(\theta)=c_k$. Begin by noting that $p^n_k(0)=0$ and $p^n_k(\pi/2)=0$. Furthermore, as a function of $\lambda = \sin^2\theta \in [0,1]$, 
    \begin{align}
        p^n_k(\lambda) = {n \choose k} \: \lambda^k \: (1-\lambda)^{n-k}
    \end{align}
    is the Binomial($k,n,\lambda$) probability of exactly $k$ success. This function is maximized by $\lambda=k/n$ and, for any $n \geq k \geq 1$, can be lower-bounded by
    \begin{align}
        p^n_k\left(\frac{k}{n}\right) = {n \choose k} \cdot \left(\frac{k}{n}\right)^k \cdot \left(1-\frac{k}{n}\right)^{n-k} \geq \left(\frac{n}{k}\right)^k \cdot \left(\frac{k}{n}\right)^k \cdot \left(1-\frac{k}{n}\right)^{n-k} = \left(1-\frac{k}{n}\right)^{n-k}.
    \end{align}
    Furthermore, leveraging the fact that for any $x \in [0,1]$, $\ln(1-x) \geq -\frac{x}{1-x}$, for any $k<n$ we have that
    \begin{align}
        \ln \left[\left(1-\frac{k}{n}\right)^{n-k}\right] &= (n-k) \cdot \ln \left(1-\frac{k}{n}\right) \geq  (n-k) \cdot \left(-\frac{k/n}{1-k/n}\right) \\
        &= (n-k) \cdot \left(-\frac{n \cdot k/n}{n-k}\right) = -k,
    \end{align}
    which implies that 
    \begin{align}
        p^n_k\left(\frac{k}{n}\right) \geq \left(1-\frac{k}{n}\right)^{n-k} \geq e^{-k},
    \end{align}
    where $e^{-k}=\calO(1)$, since $k$ is constant. Therefore, the function $p_n(\theta)$ has value 0 at $\theta=0$, increases until it reaches maximal value $\geq e^{-k}$ at $\theta=1/n$, and decreases back to 0 at $\theta=\pi/2$. Since $p^n_k(\theta)$ is continuous, by the Intermediate Value Theorem there exists at least two choices of $\theta^* \in [0,\pi/2]$ such that $p^n_k(\theta^*)=c_k \in (0,e^{-k})$.

    With this, we can now leverage amplitude amplification to show that there exists a \QACZ circuit which amplifies this success probability to exactly 1. To begin, we specify the two reflection oracles that comprise the so-called ``Grover oracle'' for amplitude amplification. The first reflection operation is the reflection about the ``initial'' state, $\ket{\veta_\theta}$, defined as
    \begin{align}
        U_\text{init} = I - 2\ketbra{\veta_\theta}{\veta_\theta} = I- 2 R_y(2 \theta)^{\otimes n} \ketbra{0^{n}}{0^{n}}(R_y(2 \theta)^{\otimes n})^\dagger.
    \end{align}
    Note that since the state $\ket{\veta_\theta}$ is itself a product state, this gate is a depth-1 \QACZ operation.
    The second reflection is a reflection about the ``winner'' state, which applies a $-1$ phase to the $\ket{D^n_k}$ state. In \QACZ, this reflection can be achieved via simple application of the \EXACT$_k$ phase oracle, i.e.
    \begin{align}
        U_\text{win} = C'.
    \end{align}
    Thus, the overall Grover oracle is defined as
    \begin{align}
        G = U_\text{init} U_\text{win} = (I - 2\ketbra{\veta_\theta}{\veta_\theta}) \cdot C'.
    \end{align}
    In running amplitude amplification, we will simply apply the Grover oracle $\ell$ times to the initial state $\ket{\veta_\theta}$, thereby obtaining the state $\ket{\psi_\ell}=G^\ell \ket{\veta_\theta}$. All that remains to show is that there exists some constant $\ell=O(1)$ such that $\ket{\psi_\ell}=\ket{D^n_k}$.

    Each Grover iteration of the amplitude amplification procedure will rotate the success probability by angle $\omega_\theta$, where
    \begin{align}
        \sin^2\omega_\theta = p^n_k(\theta).
    \end{align}
    This implies that after $\ell$ iterations the ``success'' probability is
    \begin{align}
        P_\text{succ}(\theta, \ell) = \sin^2\left((2\ell+1)\cdot \omega_\theta\right).
    \end{align}
    Therefore, in order to achieve $P_\text{succ}(\theta, \ell)=1$, we need
    \begin{align}
        (2\ell+1)\cdot \omega_\theta = \frac{\pi}{2} \:\:  (\bmod \pi) \iff  p^n_k(\theta) =\sin^2\left(\frac{\pi}{4\ell+2}\right).
    \end{align}
    By the previous Intermediate Value Theorem-based argument, we saw that we can pick $\theta$ such that $p^n_k(\theta) = c_k$ for any constant $c_k \in (0,e^{-k})$. Therefore, we simply choose the smallest integer $\ell$ such that 
    \begin{align}
        c_k = \sin^2\left(\frac{\pi}{4\ell+2}\right) < e^{-k}.
    \end{align}
    Thus,
    \begin{align}
        \ell = \left\lceil\frac{1}{4} \left(\frac{\pi}{\sqrt{\arcsin(e^{-k})}}-2\right) \right\rceil,
    \end{align}
    which is constant valued since $k=\calO(1)$. Therefore, there exists a choice of $\theta$ such that our amplitude amplification procedure maps the amplitude of the desired $\ket{D^n_k}$ state to exactly $1$, in a constant number of rounds $\ell$. Letting $d$ denote the depth of the \QACZ circuit $C'$ implementing the \EXACT$_k$ phase oracle, the overall depth of this \QACZ circuit is, thus, $(d+1)^\ell$.
\end{proof}

\subsection{Exact Preparation of \texorpdfstring{$\ket{D^n_k}$}{Lg} in \texorpdfstring{\QACZ}{Lg}} \label{sec:exact_qac0_proof}
We will now show how our exact implementation of constant-weight \EXACT~functions can be leveraged in the Dicke-to-\EXACT~reduction to achieve exact \QACZ circuits for constant-weight Dicke state preparation.

\exactdickeqaczero*

\begin{proof}[Proof of \Cref{thm:exact_dicke}]
     \Cref{thm:exactk_qac0} offers an explicit \QACZ circuit for computing the \EXACT$_k$ Boolean function into an output qubit, for any $k=\calO(1)$. Meanwhile, the Dicke-to-\EXACT~reduction, of \Cref{thm:w_to_exact}, shows that an \EXACT$_k$ \emph{phase oracle} is sufficient to implement a weight-$k$ Dicke state in \QACZ. Therefore, all that remains is to show that the \EXACT$_k$ circuit can be converted into a phase oracle. Note that such a phase oracle can be implemented by simply running the \EXACT$_k$ circuit, applying a $Z$ gate to the output, and running the \EXACT$_k$ circuit again in reverse---to uncompute the output and only leave the phase.
\end{proof}
\noindent We further note that this result straightforwardly implies (via the lower-bounds of \cite{parham2025quantumcircuitlowerbounds}) that \QACZ is not contained in the first level of the magic hierarchy.
\qaczmagic*
\begin{proof}[Proof of \Cref{thm:qacz_magic}]
     \cite{parham2025quantumcircuitlowerbounds} showed that there exists a polynomial $p$ such that \textsf{MH$_1$} cannot prepare states that are $p(n)$ close to the $W$ state. However, by \Cref{thm:exact_dicke}, \QACZ can exactly prepare the $W$ state.
\end{proof}

\subsection{Approximate Computation of \texorpdfstring{$\ket{W_n}$}{Lg} in \texorpdfstring{\QACZ}{Lg}} \label{sec:approx_w_analysis}
We will now show how our approximate implementation of the \EXACT$_1$ function in \QACZ can be leveraged in the Dicke-to-\EXACT~reduction to achieve a constant-error fidelity approximation of the $W$ state in \QACZ, using only constant ancillae.

\approxw*
\begin{proof}[Proof of \Cref{thm:wn_qac}]
We will show how to choose the approximation quantity in \Cref{thm:approx_exact_imp}
so that the resulting \QACZ~circuit  has fidelity at least $1-\varepsilon$
with the ideal $\ket{W_n}$ state.

\paragraph{Approximate \EXACT$_1$ on Basis States.}
Let $\textbf{S}=\{\calS_i\}_{i \in [t]}$ denote an instantiation of the random subsets for the gadget described in \Cref{thm:approx_exact_imp}.  This lemma provides, for any constant error parameter
$\eta = O(1)$, a \QACZ~circuit $\widetilde U_{\textbf{S}}$, instantiated by the choice of $\textbf{S}$, requiring $O(1)$ ancillae
that approximately computes \EXACT$_1$ into an output bit. 
Specifically, for each input $\x \in \{0,1\}^n$, let
\begin{align}
    \delta_{\x}^{\textbf{S}}= \mathbf{1}\left\{\widetilde U_{\textbf{S}}\ket{\x}\ket{0^a} \neq \ket{\x}\ket{\EXACT_1(\x)}\ket{0^{a-1}}\right\}
\end{align}
be the indicator that the circuit $\widetilde U_{\textbf{S}}$, instantiated with $G_{\calS_i}$ gadgets for the selected subsets $\calS_i \in \textbf{S}$, incorrectly computes $\EXACT_1(\x)$. Thus, the action of the unitary on input state $\ket{\x}$ is 
\begin{equation}
  \widetilde U_{\textbf{S}}\ket{\x}\ket{0^a} =
  (1-\delta_{\x}^{\textbf{S}})\ket{\x}\ket{E_1(\x)}\ket{0^{a-1}}
  +  \delta_{\x}^{\textbf{S}}\ket{\x}\ket{E_1(\x)\oplus 1}\ket{0^{a-1}}.
  \label{eq:UE1-pointwise}
\end{equation} 
Furthermore, by \Cref{thm:approx_exact_imp}, note that, for any constant error $\eta$ and all $\x \in \bin^n$,
\begin{align} \label{eqn:exp_delta}
    \Ex_{\textbf{S} \in \mathfrak{S}}\left[\delta_{\x}^{\textbf{S}}\right] = \Pr_{\textbf{S} \in \mathfrak{S}} \left[U_{\textbf{S}}\ket{\x}\ket{0^a} \neq \ket{\x}\ket{\EXACT_1(\x)}\ket{0^{a-1}}\right]\leq \eta.
\end{align}

\paragraph{Approximate \EXACT$_1$ on the Initial Superposition.}
Let $U_E$ denote the unitary that exactly computes the \EXACT$_1$ function into the output qubit.
Denote the ``ideal'' initial state $\ket{\psi}$ of the exact $\ket{W_n}$-state reduction (before the amplitude amplification step) from the \EXACT-to-Dicke reduction (Lemma~\ref{thm:w_to_exact}) as
\begin{equation}
  \ket{\psi}
  \;:=\;
  U_{E} R^{\otimes n}\ket{0^n}\ket{0^a}
  \;=\;
  \sum_{\x \in \{0,1\}^n} \beta_{\x}(\theta)\,\ket{\x}\ket{\EXACT_1(\x)}\ket{0^{a-1}},
\end{equation}
and the ``real'' state achieved via our approximate \EXACT$_1$ \QACZ unitary construction $\widetilde U_{\textbf{S}}$ as
\begin{align}
  \ket{\widetilde{\psi}_{\mathbf{S}}}&:=\widetilde U_{\textbf{S}} R^{\otimes n}\ket{0^n}\ket{0^a}\\
  &= \sum_{\x \in \{0,1\}^n} \beta_{\x}(\theta)
     \left(
       (1-\delta_{\x}^{\mathbf{S}})\ket{\x}\ket{\EXACT_1(\x)}\ket{0^{a-1}}
  +  \delta_{\x}^{\mathbf{S}}\ket{\x}\ket{\EXACT_1(\x)\oplus 1}\ket{0^{a-1}}
     \right).
\end{align}
The difference between these two states is the subset of inputs $\x$ for which $\widetilde U_{\textbf{S}}$ misclassifies $x$, i.e.
\begin{equation}
  \ket{\widetilde{\psi}_{\mathbf{S}}} - \ket{\psi}
  =
  \sum_{\x \in \{0,1\}^n} \beta_{\x}(\theta)
     \cdot \delta_{\x}^{\mathbf{S}} \ket{\x}\ket{\EXACT_1(\x)\oplus 1}\ket{0^{a-1}}.
\end{equation}
By triangle inequality, the norm of the error is
\begin{equation}
  \left\| \ket{\widetilde{\psi}_{\mathbf{S}}} - \ket{\psi} \right\|_2^2
  =
  \sum_{\x \in \{0,1\}^n} |\beta_{\x}(\theta)|^2
     \cdot \delta_{\x}^{\mathbf{S}},
\end{equation}
which, by \Cref{eqn:exp_delta}, has expected value
\begin{align}
    \Ex_{\textbf{S} \in \mathfrak{S}}\left[\left\| \ket{\widetilde{\psi}_{\mathbf{S}}} - \ket{\psi} \right\|_2^2\right] = \sum_{\x \in \{0,1\}^n} |\beta_{\x}(\theta)|^2
     \cdot \Ex_{\textbf{S} \in \mathfrak{S}}\left[\delta_{\x}^{\mathbf{S}}\right] \leq \eta \sum_{\x \in \{0,1\}^n} |\beta_{\x}(\theta)|^2 = \eta.
\end{align}
Since this is the expectation over all possible random choices of $\mathbf{S}$, there must exist a specific choice $\mathbf{S}^*$ such that
\begin{align} \label{eqn:err_bound}
    \left\| \ket{\widetilde{\psi}_{\mathbf{S}^*}} - \ket{\psi} \right\|_2^2
  \leq \eta.
\end{align}
It is this choice of $\mathbf{S}^*$ that we will use to instantiate the approximate \EXACT$_1$ unitary, i.e. $\widetilde U_{\textbf{S}^*}$. Furthermore, by the standard relation between Euclidean 2-norm and fidelity,
\begin{equation}
  \calF\bigl(\ket{\widetilde{\psi}_{\mathbf{S}^*}},\,\ket{\psi}\bigr)
  \ge 1 - \eta.
\end{equation}

\paragraph{Error Propagation in OAA.}
\Cref{thm:w_to_exact} showed that for a constant number of iterations $t$, there exists a choice of $\theta$ such that OAA boosts the superposition amplitude on the $W_n$ state to exactly 1. Note that for Dicke states with $k=1$ it suffices to set $t=1$. 

Let $U_W$ denote the full circuit unitary in the exact case, in which queries are made to the exact \EXACTONE unitaries $U_{E[1]}$ and $U_{E[1]}^\dagger$. Let $\widetilde{U}_W$ denote the approximate circuit, in which queries are instead made to approximate \EXACTONE unitaries $\widetilde{U}_{E[1]}^{\{S_i^*\}}$ and $\widetilde{U}_{E[1]}^{\{S_i^*\}}$. Since the circuit only uses one iteration of OAA, overall it will make two queries to $\widetilde{U}_{E[1]}^{\{S_i^*\}}$ and one query to $\left(\widetilde{U}_{E[1]}^{\{S_i^*\}}\right)^\dagger$.
By \Cref{eqn:err_bound}, each approximation perturbs the state by at most $\sqrt{\eta}$ (in Euclidean 2-norm), so a hybrid argument gives
\begin{align}
  \bigl\| \widetilde U_{\mathrm{W}}\ket{0^{n+a}} - U_{\mathrm{W}}\ket{0^{n+a}} \bigr\|_2
  \;\le\; 3 \sqrt{\eta}.
\end{align}
By the standard relation between Euclidean 2-norm and fidelity,
\begin{align}
    F\bigl( \widetilde U_{\mathrm{W}}\ket{0^{n+a}},\, \ket{W_n}\ket{0^{a-1}} \bigr) = F\bigl( \widetilde U_{\mathrm{W}}\ket{0^{n+a}},\, U_{\mathrm{W}}\ket{0^{n+a}} \bigr)
  \;\ge\; 1 - 9 \eta.
\end{align}
Therefore, if we set $\eta = \varepsilon/9$, we obtain the desired bound.
\end{proof}

\section{Parallelization Procedures Using Quantum \FANOUT} \label{sec:parallization}
We will now prove some intermediary results, showing that certain operations can be implemented in \QACZf. These will be used as subroutines in \Cref{sec:qaczf_arb_dicke}, for our proof that \QACZf can compute arbitrary-weight Dicke states.

We will first show that if there exists a \QACZf circuit to implement unitary $C$, then there exists a \QACZf circuit implementing the controlled-$C$ unitary. 
\begin{fact}[Controlled \QACZf Circuits are in \QACZf]\label{fact:controlCkt}
Let $C$ be any depth-$d$ \QACZf circuit using $m = \poly(n)$ ancillae. The following transformation is in \QACZf: 
\begin{align}
    \ket{0}_x \ket{0^{n+m}} &\mapsto \ket{0}_x \ket{0^{n+m}} \\
    \ket{1}_x \ket{0^{n+m}} &\mapsto \ket{1}_x C \ket{0^{n+m}} 
\end{align} 
\end{fact}
\begin{proof}
Note that \QACZf allows for any $\poly(n)$-\FANOUT in constant depth. Note that $C$ can have at most $m$ gates per layer. Therefore, we apply $m$-\FANOUT to create $n$ classical copies of $x$, and then run $C$ by controlling each gate of on a different copy register. Finally we can un-compute the copies by reversing the fanout.

To see why such a circuit is valid, recall that \QACZf can be defined using only reflection gates and \PARITY gates \cite{moore1999qac0} (alternatively to \FANOUT gates). Any controlled reflection gate is a valid reflection gate, and any controlled \PARITY gate can be implemented cleanly by first computing \PARITY in a separate register $t$, and then apply a $\ccnot$ gate with the control register and $t$ onto the output register, and finally uncomputing the $t$ register. 
\end{proof}

As a consequence of \cite{moore2001parallel}, we will now show that $n$ different $\ccnot$ gates, which all share either a single control or target qubit, are implementable via a \QACZf circuit. Naively one would expect this sequence of operations to require depth $n$, but we show that they can be parallelized into constant-depth via \QACZf.
\begin{claim} [Consequence of \cite{moore2001parallel}]\label{cl:fanoutpar}
Let $A = \clr{a_1, \dots a_n}$, $B = \clr{b_1, b_2 \dots b_n}$ and $t$ be $2n+1$ qubits.  Then, the unitary transformation, 
  $$\ket{\psi}_{A,B,t} \mapsto_{U_1}  \lr{ \ccnot(a_1,t,b_1) \cdot \ccnot(a_2,t,b_2) \cdots \ccnot(a_n,t,b_n)} \ket{\psi}_{A,t}$$
  as well as, 
    $$\ket{\psi}_{A,B,t} \mapsto_{U_2}  \lr{ \ccnot(a_1,b_1,t) \cdot \ccnot(a_2,b_2, t) \cdots \ccnot(a_n,b_n,t)} \ket{\psi}$$
  can both be implement in $O(1)$ depth \QACZf.
\end{claim}
\begin{proof}
Note that in the case when the $\ccnot$ are replaced simply by a $\cnot$, this follows trivially from the definition of \QACZf, since the first operation would be a \FANOUT gate and the second operation would be the \PARITY gate \cite{moore2001parallel}. 

We point out that this also holds in the presence of additional controls, using the same change of basis relation between \FANOUT and \PARITY. 
The first operation, $U_1$, can be implemented by making $n$ classical copies of $t$ using the $\FANOUT$ gate, and then un-computing these copies. 
Then, observe that the second operation, $U_2$, is exactly the same as the $U_1$ after conjugating with Hadamards, i.e,
\begin{align}
 H^{\tens {n+1}}_{B,t} U_1  H^{\tens {n+1}}_{B,t}  &= 
 H_t \cdot \lr{\prod_{i \in [n]} H_{b_i} \ccnot(a_i,t,b_i) H_{b_i}} \cdot H_t \\
 &= \prod_{i \in [n]} H_{b_i} \tens H_t \cdot \ccnot(a_i,t,b_i) \cdot H_{b_i} \tens  H_t \\
 &= \prod_{i \in [n]} \ccnot(a_i,b_i,t) \\
 &= U_2
\end{align}
Therefore, $U_2$ can also be implemented in $O(1)$ depth using the $U_1$ circuit. 
\end{proof}

Finally, using our ability to parallelize the multiple $\ccnot$ gates on a shared target/control in \QACZf (i.e. \Cref{cl:fanoutpar}), we will now show that the same can be done for controlled-\SWAP~(i.e $\cswap$) gates which share a target.
\begin{corollary}\label{cor:parswap}
Let $A = \clr{a_1, \dots a_n}$, $B = \clr{b_1, b_2 \dots b_n}$ and $t$ be $2n+1$ qubits. Consider the following transformation defined on any $\y \in \bin^n$ of $|y| \leq 1$.  
  $$\ket{\y}_A \ket{\psi}_{B,t} \mapsto  \lr{ \cswap(a_1,b_1,t) \cdot \cswap(a_2,b_2, t) \dots \cswap(a_n,b_n,t)}  \ket{\y}_A \ket{\psi}_{B,t}$$
  This transformation can be implemented in $O(1)$ depth \QACZf.
\end{corollary}
\begin{proof}
   Each $\cswap(a_j, b_j,t)$ gate can be written as $\ccnot(a_j, b_j, t) \cdot \ccnot(a_j, t, b_j) \cdot \ccnot(a_j, b_j, t)$. Note for any $\y$ with $|y| \leq 1$, and $i \neq j$,
   \begin{align}\label{eq:comm}
       [\ccnot(a_j, b_j, t), \ccnot(a_i, t, b_i)]  \cdot \ket{\y}_A \ket{\psi}_{B,t} = 0
   \end{align}
    Define, 
    \begin{align}
        U_1 &:= \cswap(a_1, b_1, t) \cswap(a_2, b_2, t) \dots \cswap(a_n, b_n,t) \\
        U_2 &:= \cswap(a_1, t, b_1) \cswap(a_2, t, b_2) \dots \cswap(a_n, t, b_n) \\
    \end{align}
    Then, due to \Cref{eq:comm}, on any such $\y$, 
    \begin{align}
        \lr{ \cswap(a_1,b_1,t) \cdot \cswap(a_2,b_2, t) \dots \cswap(a_n,b_n,t)} \ket{\y}_A \ket{\psi}_{B,t} &= U_1 U_2 U_1  \ket{\y}_A \ket{\psi}_{B,t}
    \end{align}
    Then, by \Cref{cl:fanoutpar}, we can implement each of these in \QACZf in $O(1)$ depth.
\end{proof}
\section{Preparing Arbitrary-Weight Dicke States in \texorpdfstring{\QACZf}{Lg}} \label{sec:qaczf_arb_dicke}
We will now show how Dicke states $\ket{D^n_k}$ of arbitrary-weight $k \in [n]$ can be exactly and cleanly prepared in \QACZf. 

To begin, we leverage the work of \cite{takahashi2012collapse}, to straightforwardly show that \EXACT$_k$ is exactly implementable in \QACZf, for arbitrary $k \in [n]$. 
\begin{fact}[Exact Arbitrary-Weight \EXACT~in \QACZf] \label{thm:exactk_qac0f}
    For any $k \in [n]$, there exists a \QACZf circuit which computes \textnormal{\EXACT}$_k$ using $O(n \sqrt{n \log n})$ ancillae.
\end{fact}
\begin{proof}[Proof of \Cref{thm:exactk_qac0f}]
    By \Cref{fact:thresh_qac0f}, for any $k \in [n]$, $\THRESH_k$ is implementable in \QACZf. By \Cref{fact:exact_thresh}, $\EXACT_k$ can be computed from $\THRESH_k$ in \QACZf, with negligible ancilla-overhead.
\end{proof}
Additionally, we use the following amplitude amplification primitive. 

\begin{fact}[Removing Error from Marked State]\label{fact:singleamp}
Suppose the state $\ket{\psi}$ of the below form can be cleanly synthesized in \QACZf, 
$$\ket{\psi} = \sin \theta \ket{\varphi_0}_Q \ket{0}_a + \cos\theta \ket{\varphi_1}_Q \ket{1}$$
such that $\cos^2 \theta  \geq 1/4$, then, there exists a \QACZf using one additional ancilla to cleanly synthesize the state $\ket{\varphi_1}_Q$.
\end{fact}
\begin{proof}
This is essentially special case of the amplitude amplification procedure from \cite{grier2024threshold}, however to directly apply exact amplitude amplification, $\theta$ needs to be an integer multiple of $\pi/3$. Instead, we can adjust the amplitudes slightly using ancillas and achieve $\ket{\varphi_1}$ in a single round as follows. 
First, prepare an extra ancilla $b$ in the state $\ket{\nu}_b = \gamma \ket{1} + \sqrt{1-\gamma^2} \ket{0}$ where $\gamma = 1/(2 \cos \theta)$. Now the resulting state looks like, 

\begin{align}
\ket{\psi_1} &=  \frac{\sqrt{3}}{2} \ket{\varphi'}_{Q,a,b} - \frac{1}{2} \ket{\varphi_1}_Q \ket{11}_{a,b}
\end{align}
For the remaining state $\ket{\varphi'}$ in the $(I-\kb{11}_{ab})$ subspace. 
Apply the reflection gate $(I - 2\kb{11}_{ab})$ to obtain. 
\begin{align}
\ket{\psi_1} &=  \frac{\sqrt{3}}{2} \ket{\varphi'}_{Q,a,b} - \frac{1}{2} \ket{\varphi_1}_Q \ket{11}_{a,b}
\end{align}
Then, apply $(I - 2\kb{\psi_1})$ which can be implemented by $C_1^\dag (I - 2\kb{\vec{0}}_{Q,a,b}) C_1$, where $C_1$ is the circuit used to create $\ket{\psi_1}$ which in turn uses the circuit for $\ket{\psi}$. This produces, 
\begin{align}
\ket{\psi_2} &=   (I - 2\kb{\psi_1})  \frac{\sqrt{3}}{2} \ket{\varphi'}_{Q,a,b}  - \frac{1}{2} \ket{\varphi_1}_Q \ket{11}_{ab} \\
&= - \ket{\varphi_1} \ket{11}_{ab}\\
\end{align}
Finally apply a $Z$ gate on $a$ and $X \tens X$ on $a,b$ to clean up and obtain, 
\begin{align}
\ket{\psi_3} &= \ket{\varphi_1}_Q \ket{00}_{ab}
\end{align}
\end{proof}
We will now use the ability to compute arbitrary-weight $\EXACT_k$ and some ideas similar to those used in our \QACZ Dicke-to-\EXACT~reduction (\Cref{thm:w_to_exact}) to describe an explicit circuit for exact preparation of $\omega(1)$-weight Dicke states in \QACZf. 
%

\begin{lemma}\label{lem:cleandickef}
For any $k = \omega(1)$, the $n$-qubit weight-$k$ Dicke state $\ket{D^n_k} \ket{0^a}$ can be synthesized cleanly by a \QACZf circuit using $a=O(n^2 \sqrt{\log n})$ ancillae.  
\end{lemma}
\begin{proof}
    Similar to the proof of \Cref{thm:w_to_exact}, we begin by considering the state of the form 
    \begin{align}
        \ket{\veta_\theta} = R_y(2 \theta)^{\otimes n} \ket{0^{n}} &= \left( \cos\theta \ket{0}+\sin\theta \ket{1} \right)^{\otimes n} \\
        & = \alpha_\theta \sum_{|\x|=k} \ket{\x} + \sqrt{1-\alpha_\theta^2} \sum_{|\y| \neq k} \ket{\y} \\
        &= \alpha \ket{D^n_k} + \beta \ket{\perp}
    \end{align}
    where $\alpha_\theta = (\sin\theta)^k \cdot (\cos\theta)^{n-k}$, $\ket{\perp}$ is some state perpendicular to $\ket{D^n_k}$, and $|\alpha|^2+|\beta|^2=1$. Once again, we will define the overlap between this state and the weight-$k$ Dicke state as
    \begin{align}
        p^n_k(\theta) = |\braket{\veta_\theta | {D^n_k}}|^2 = |\alpha|^2.
    \end{align}

    Note that if there existed some $\theta$ for which $p^n_k(\theta)=O(1)$, we could simply apply the same procedure as in \Cref{thm:w_to_exact}, leveraging the fact that in \QACZf~we can implement \EXACT$_k$ for any weight $k \in [1,n-1]$, to amplify the state to the desired Dicke state in constant depth. However, as previously mentioned, we should not expect this to be the case (even for maximizing parameter $\theta=k/n$), and from hereon will in fact assume that $p^n_k(\theta)  =o(1)$. As previously discussed $p^n_k(\theta)$ is binomially distributed, such that in the worst case, i.e. when $k=n/2$, $p^n_k(\theta) = \Omega(1/\sqrt{n})$. This implies that for any $k$ we can choose a $\theta$ such that $1/p^n_k(\theta)=O(\sqrt{n})$.

    We will now argue that for this parameter regime, it is possible to synthesize the state $\ket{\perp}$ via a \QACZf circuit that using two additional ancillae. We will use this gadget later on in the algorithm. First, apply the \EXACT$_k$ function to $\ket{\veta_\theta}$, mapping the output to the ancilla register and flip the output. This produces the state $\alpha \ket{D^n_k}\ket{0} + \beta \ket{\perp}\ket{1}$. Since $p^n_k(\theta)  =o(1)$, this implies that $\alpha = o(1)$ which in turn means that $\beta^2 \gg 1/4$. Therefore, a direct application of \Cref{fact:singleamp} produces a circuit $C_{\perp}$ to synthesize $\ket{\perp}$ cleanly.  
    With these preliminary observations, we can now describe the algorithm for preparing the weight-$k$ Dicke state.

    In the following procedure, we will only explicitly track the ancilla used by our descriptions. The sub-modules such as $\EXACT_k$ and the fanout used in controlled-$\SWAP$ perform their own cleanup of the ancillas. 
    \paragraph{1) Initialization.} We begin in the all zeros state and use a layer of single-qubit rotation gates to prepare $m$ copies of the $n$-qubit state $\ket{\veta_\theta}$. For each $i \in [m]$, denote the $i^{th}$ $n$-qubit register as $T_i$. This therefore maps the system to the state
    \begin{align}
        \ket{\psi_1} & = \ket{\veta_\theta}^{\otimes m} \ket{0^{n+m}} \\ 
        &= \sum_{\x \in \{0,1\}^m} \left( \prod_{i \in [m]} \alpha^{x_i} \beta^{
        1-x_i} \right) \bigotimes_{i \in [m]} \Big(x_i \ket{D^n_k} + (1-x_i) \ket{\perp}\Big)_{T_i} \otimes \ket{0^{n+m}}.
    \end{align}
    In the second equality, the state across all $m$ copies is decomposed into a superposition over branches in the $\ket{D^n_k}$ and $\ket{\perp}$ basis. In particular, the $m$-bit string $\x \in \{0,1\}^m$ encodes the number of $\ket{D^n_k}$ versus $\ket{\perp}$ states within each state copy of the superposition branch, with $|\x|$ tracking the total number of $\ket{D^n_k}$ states.

    \paragraph{2) Flagging Dicke States.} We will now attribute to the $i^{th}$ state copy a flag ancilla register, denoted ``\textsf{flag}$(i)$'' initialized to the state $\ket{0}$. We will apply the \EXACT$_k$ function to each of the copies and map the output to the corresponding ancilla register. For each $i \in [m]$, \textsf{flag}$(i)$ will flag, in branch of the superposition, whether register $T_i$ contains the desired $\ket{D^n_k}$ state or undesired $\ket{\perp}$ state. This, therefore, maps the system to the state
    \begin{align}
        \ket{\psi_2} = \sum_{\x \in \{0,1\}^m} \left(\prod_{i \in [m]} \alpha^{x_i} \beta^{
        1-x_i}\right) \lr{\bigotimes_{i \in [m]} \Big(x_i \ket{D^n_k} + (1-x_i) \ket{\perp}\Big)_{T_i} \ket{x_i}_{\textsf{flag}(i)}} \otimes \ket{0^{n}}.
    \end{align}

    \paragraph{3) Amplifying to a ``Block-$W$'' State.}
    For simplicity, let us now denote $p:= p^n_k(\theta)$. If we pick the number of copies to be $m = \Theta(1/p)$ (which is at most $O(n)$ by the earlier discussion), then the probability mass over the superposition of all $m$ branches with \emph{exactly} one copy being flagged can be made an arbitrarily large constant. In particular, we choose $m$ to make, $p^{*} = m p (1-p)^{m-1} \geq 1/4$. 
    
    We will now assign one of the unused ancilla registers, denoted ``\textsf{mark}'', to track whether the a given branch of the superposition contains exactly one Dicke state (i.e. has exactly one register \textsf{flag}$(i)$ set to 1). In particular, denote $\textsf{flag} = \{\textsf{flag}(1), \textsf{flag}(2), \cdots, \textsf{flag}(m)\}$. We will thus apply our \QACZ~circuit for $\EXACT_1$ to the $m$-bit register \textsf{flag} and map its output into 1-qubit register \textsf{mark}, obtaining the state:
    \begin{align}
        \ket{\psi_3} = \sqrt{\frac{p^*}{m}}  &\sum_{\substack{\x \in \{0,1\}^m:\\|\x|=1}} \:\:\left(\bigotimes_{i \in [m]} \Big(x_i \ket{D^n_k} + (1-x_i) \ket{\perp}\Big)_{T_i}\ket{x_i}_{\textsf{flag}(i)} \right) \otimes  \ket{1}_\textsf{mark} \ket{0^{n-1}}\\
        &+ \sqrt{1-p^*} \ket{\text{junk}} \ket{0}_\textsf{mark} \ket{0^{n-1}}
    \end{align}
    
    Since $p^* \geq 1/4$, we can leverage \Cref{fact:singleamp} to amplify the system exactly into the portion of the superposition with \textsf{mark} set to 1. In other words, we amplify the system into a uniform superposition over all the branches containing exactly one Dicke state, i.e. the ``Block-$W$'' state:
    \begin{align}
        \ket{\psi_4} = \frac{1}{\sqrt{m}}\sum_{\substack{\x \in \{0,1\}^m:\\|\x|=1}} \:\:\left(\bigotimes_{i \in [m]} \Big(x_i \ket{D^n_k} + (1-x_i) \ket{\perp}\Big)_{T_i}\ket{x_i}_{\textsf{flag}(i)} \right) \otimes  \ket{0^{n}}.
    \end{align}
    This step is abstractly depicted in \Cref{fig:qac0f_amplify}. Since the \textsf{mark} register is already uncomputed due to \Cref{fact:singleamp}, we drop its label, allowing us to reuse the ancilla.
    
    \paragraph{4) Initializing the Output Register.} We will now use the previously described ability to compute the $\ket{\perp}$ state to prepare an $n$-qubit output register in the remaining ancillas, denoted ``\textsf{out}'', to the state $\ket{\perp}$. This maps the overall system to state
    \begin{align} \label{eqn:int_state}
        \ket{\psi_5} = \frac{1}{\sqrt{m}}\sum_{\substack{\x \in \{0,1\}^m:\\|\x|=1}} \:\:\left(\bigotimes_{i \in [m]} \Big(x_i \ket{D^n_k} + (1-x_i) \ket{\perp}\Big)_{T_i}\ket{x_i}_{\textsf{flag}(i)}\right) \otimes \ket{\perp}_{\textsf{out}}.
    \end{align}
    
    \paragraph{5) Extracting Dicke to the Output.} For each $i \in [m]$, we will apply a controlled-\SWAP$_i$ gate which, controlled on register $\textsf{flag}(i)$ being in the state $\ket{1}$, performs a \SWAP~operation between the $n$-qubit states in registers $T_i$ and $\textsf{out}$. While it might seem like the application of all $m$ of these controlled-\SWAP~operations the same \textsf{out} register would cause a depth blow-up prohibiting implementation in \QACZf, we prove in \Cref{cor:parswap} that in fact this series of controlled-\SWAP~operations can be parallelized to a constant-depth circuit. 
    
    Since each branch of the superposition in \Cref{eqn:int_state} has at most one register $T_{i^*}$ that is set to the $\ket{D^n_k}$ state (with all others set to the $\ket{\perp}$ state), only the $i^*$th flag register, i.e. $\textsf{flag}(i^*)$, will be set to the state $\ket{1}$ (with all others set to the state $\ket{0}$). Therefore, for each $i\in[m]$, the controlled-\SWAP$_i$~operation will swap the singular $\ket{D^n_k}_{T_i}$ state in each branch of the superposition with the $\ket{\perp}_{\textsf{out}}$ state in the output register. This will map the system to the state
    \begin{align} \label{eqn:int_state}
        \ket{\psi_6} = \frac{1}{\sqrt{m}}\sum_{\substack{\x \in \{0,1\}^m:\\|\x|=1}} \:\:\left(\bigotimes_{i \in [m]} \ket{\perp}_{T_i}\ket{x_i}_{\textsf{flag}(i)}\right) \otimes \ket{D^n_k}_{\textsf{out}}.
    \end{align}
    
    \paragraph{6) Uncomputing Ancillae.} Let $T= T_1 \cup T_2 \cup \cdots \cup T_m$. By shuffling the qubit ordering and observing once again that only one \textsf{flag} ancilla register was set to state $\ket{1}$ per branch of the superposition, the overall state can be re-expressed as
    \begin{align}
        \ket{\psi_6} &= \ket{D^n_k}_{\textsf{out}} \otimes \ket{\perp}^{\otimes m}_T \otimes \left(\frac{1}{\sqrt{m}}\sum_{\substack{\x \in \{0,1\}^m:\\|\x|=1}} \:\:\bigotimes_{i \in [m]}\ket{x_i}_{\textsf{flag}(i)}\right)  \\
        &= \ket{D^n_k}_{\textsf{out}} \otimes \ket{\perp}^{\otimes m}_T \otimes \ket{D^m_1}_\textsf{flag}.
    \end{align}
    
    We will now show that all the ancillary registers can be uncomputed.  Leveraging the previously discussed fact that there exists a \QACZf~circuit $C_\perp$ to compute the state $\ket{\perp}$, we can run the circuit in reverse, i.e. $C_\perp^\dagger$, to uncompute each  $\ket{\perp}$ in the $T$ register. Finally, since we previously showed that there exists a \QACZ~circuit $C$ for computing weight-1 Dicke states, $C^\dagger$ can uncompute the $\ket{D^m_1}$ Dicke state in the \textsf{flag} register. Thus, the system is mapped to the desired state $\ket{D^n_k}_{\textsf{out}} \otimes \ket{0^{n\cdot m + m}}$.

    \paragraph{Overall Ancilla Count.} We will now calculate the asymptotic ancilla-overhead of the procedure. Note that the \EXACT$_1$ operation is applied to the $m$ \textsf{flag}$(i)$ registers to  mark all the superposition branches containing exactly one Dicke state (in the \textsf{mark} register). By \Cref{thm:exact_dicke}, this has ancilla overhead $O(n^2)$. However, the overall ancilla cost is dominated by the parallel application of \EXACT$_k$ operation to each of the $m$ ancillae registers $T_i$, so as to flag each ancillae \textsf{flag}$(i)$. By \Cref{thm:exactk_qac0f}, the ancilla overhead of each  \EXACT$_k$ is $O(n \sqrt{n \log n})$. Since for $k\approx n/2$ we have that $m=O(\sqrt{n})$, this implies that the ancilla-overhead of the procedure is $O(n^2 \sqrt{\log n})$. Note that implementing the controlled-$\SWAP$ operations require fanout on the controls, which adds another $n \cdot m$ overhead, but does not change the asymptotic count. 
\end{proof}

This provides the required pieces to complete our main theorem regarding preparation of arbitrary-weight Dicke states in \QACZf. 

\exactqaczf*
\begin{proof}
When $k = O(1)$, this follows from \Cref{thm:exact_dicke} and when $k = \omega(1)$ it follows from  \Cref{lem:cleandickef}.
\end{proof}

\end{document}